\newtheorem{theorem}{Theorem}
\begin{document}

\title{A Diversity-Multiplexing-Delay Tradeoff of ARQ Protocols in The Z-interference Channel}
\author{\large Mohamed S. Nafea$^*$, D. Hamza$^\dag$, Karim G. Seddik$^\ddag$, Mohamed Nafie$^*$, and Hesham El Gamal$^\S$\\ [.1in]
\normalsize  \begin{tabular}{c}
$^*$Wireless Intelligent Networks Center (WINC), Nile University, Cairo, Egypt. \\
$^\dag$Physical Sciences and Engineering Division, KAUST, Thuwal, KSA.\\
$^\ddag$Electronics Engineering Department, American University in Cairo, AUC Avenue, New Cairo, Egypt.\\
$^\S$Department of Electrical and Computer Engineering, Ohio State University, Columbus, USA.\\
\footnotesize Email: mohamed.nafea@nileu.edu.eg, doha.hamzamohamed@kaust.edu.sa, kseddik@ieee.org, mnafie@nileuniversity.edu.eg, helgamal@ece.osu.edu\\ \normalsize
\end{tabular}
}
 \maketitle
\begin{abstract}
In this work, we analyze the fundamental performance tradeoff of the single antenna  Automatic Retransmission reQuest (ARQ) Z-interference channel (ZIC). Specifically, we characterize the achievable three-dimensional tradeoff between diversity (reliability), multiplexing (throughput), and delay (maximum number of retransmissions) of two ARQ protocols: A non-cooperative protocol and a cooperative one. Considering no
cooperation exists, we study the achievable tradeoff of the fixed-power split Han-Kobayashi (HK) approach. Interestingly, we demonstrate that if the second user transmits the common part only of its message in the event of its successful decoding and a decoding failure at the first user, communication is improved
over that achieved by keeping or stopping the transmission of both the common and private messages. We obtain closed-form expressions for the achievable tradeoff under the HK splitting. Under cooperation, two
special cases of the HK are considered for static and dynamic decoders. The difference between the two decoders lies in the ability of the latter to dynamically choose which HK special-case decoding to apply. Cooperation is shown to dramatically increase the achievable first user diversity.

\end{abstract}

\section{Introduction}\label{Int}
The Z-Interference channel (ZIC) is the natural information theoretic model for many practical wireless communication systems. For example, in femto-cells where a mobile station communicating with its long-range base station causes interference to the receiver of a short-range femto-cell, known as the ``loud neighbor problem''  \cite{LNP}, the system can be accurately modeled as a ZIC. This work explores the achievable diversity, multiplexing, and delay tradeoff of the {\it{outage limited}} single antenna ARQ Z-interference channel (ZIC) \cite{LNP} in the large signal-to-noise ratio (SNR) asymptote.

The diversity and multiplexing tradeoff (DMT) framework was initiated by Zheng and Tse \cite{TseDiv} in standard Multi-Input Multi-Output (MIMO) channels. EL Gamal \textit{et al.} \cite{ArqHesham} extended Zheng and Tse's work by introducing the use of ARQ in delay-limited single-link MIMO channels. The authors in \cite{ArqHesham} showed that the ARQ retransmission delay can be leveraged to enhance the reliability of the system at a negligible loss of the effective throughput rate. In addition, the authors in \cite{Kambiz} considered cooperative schemes in ARQ networks; either a single relay is dedicated to simultaneously help two multiple access users or two users cooperate in delivering their messages to a destination equipped with two receiving antennas. In particular, we extend the diversity, multiplexing and delay tradeoff studied in \cite{ArqHesham} to the two user single antenna  ARQ ZIC setting for both non-cooperative and cooperative scenarios.

This work first discusses a non-cooperative ARQ protocol under the use of the two-message fixed-power split Han-Kobayashi (HK) approach at the second user transmitter ~\cite{heshamHK,Bolcskei,tuni,me}. We consider a transmission policy that necessitates that the second transmitter transmits only the common part of its message if it receives a positive acknowledgment (ACK) while a negative acknowledgment (NACK) is received at the first transmitter. By considering two special cases of the HK splitting, a common-message-only (CMO) scheme and a treating-interference-as-noise (TIAN) scheme (i.e. only a private message is sent from the second transmitter) \cite{me}, we show the superiority of our transmission policy over the other policies of continuing or stopping the transmission of both the common and private messages under the stated feedback states. We assume that the splitting parameters are determined according to the outage events at the end of the transmission block of the same information message at both users in order to optimize the achievable diversity gain region (DGR) \cite{me}. The channel state information (CSI) is assumed to be perfectly known at the receivers but is unknown at the transmitters. Therefore, we assume that the chosen splitting parameters remain fixed for fixed rates, interference level, and retransmission delay; the second transmitter can only continue or cease the transmission of its common or private message. We obtain closed-form expressions for the achievable tradeoff under the said policy.

Next, we consider a cooperative ARQ scenario where the second transmitter assists the first one in relaying its message in the event of a NACK reception at the first transmitter. The cooperative protocol divides into static decoding and dynamic decoding schemes. Under static decoding, we solve for the achievable tradeoff by tracing the maximum of that achieved using either the CMO or the TIAN schemes considering the relaying scenario. Under dynamic decoding, the decoder of the first user dynamically changes its decoding algorithm according to the channel conditions revealed to it; either to decode the interference of the other user (i.e. CMO scheme) or to treat it as noise (i.e. TIAN scheme). Finally, we show the superiority of the dynamic decoding scheme over the static one. Unlike the work in \cite{me}, we characterize here the achievable tradeoff at each user for a fixed multiplexing gain of the other user.

To highlight the advantage of the ARQ protocols, we adopt in this paper a coherent delay-limited (or quasi-static) block fading channel model where the channel gains are assumed to be fixed over the transmission of the same information message. By doing this, we focus on the ARQ diversity advantage without exploiting temporal diversity.

The rest of the paper is organized as follows. In Section \ref{sysmod}, we describe the system model and notation. Section \ref{non-coop} analyzes the achievable diversity, multiplexing, and delay tradeoff for the non-cooperative ARQ protocol under the use of the HK approach and its two special cases. In Section \ref{coop}, we characterize the three dimensional tradeoff for two different variations of the cooperative ARQ protocol. Section \ref{Con} concludes the paper.

\section{System Model}\label{sysmod}
We consider a two user single antenna  communication system over a Rayleigh fading Z-interference channel (ZIC). User's two transmitter (TX2), causes interference to user's one receiver (RX1) but not vice versa as depicted in Fig. \ref{zmodel}. Both users are backlogged, i.e., they always have information messages to send. Each user in our model employs an ARQ error control protocol with a maximum of $L$ transmission rounds. To allow for retransmissions, the information message from each transmitter is encoded into a sequence of $L$ vectors, $\left\{x_{i,l}\in \mathbb{C}^T :\;\;i=1,2\;\;\mbox{and } l=1,\cdots,L\right\}$, where the transmission of each vector takes $T$ channel uses. Each decoder is allowed to process its corresponding received signal over all the $l$ received blocks to decode the transmitted message. Each receiver sends an ACK back to its corresponding transmitter when decoding is successful. A NACK is sent if decoding fails. The ACK/NACK one-bit message is the only feedback allowed in this model and the ARQ feedback channel is assumed to be error-free and of negligible delay.

Our system prescribes to the following ARQ protocol. When both transmitters receive an ACK, they each proceed to send the first block of their next messages. If TX1 receives an ACK while TX2 receives a NACK, TX1 will cease its transmission until TX2 receives an ACK. When TX1 receives a NACK for its message, it begins the transmission of the next block of its current message; while the behavior of TX2 varies according to its feedback outcome and the used ARQ protocol as detailed in the next sections. The reason for differentiating between the case when TX1 receives an ACK while TX2 receives a NACK and the reverse case is that the first user message is not decoded at the second receiver but not vice versa. When the maximum number of protocol rounds $L$ is reached, both transmitters start transmitting the first block of their next messages regardless of the feedback outcome. Error at each user occurs due to any of the following two events. Either $L$ transmission rounds are reached and decoding fails or the decoder makes a decoding error at round $l\leq L$ and fails to detect it (undetected error event).

Based on the above description, the received signal vectors can be described as follows.
\begin{equation}
\begin{split}
\bold{y}_{1,l}&=\mu_{11}h_{11}\bold{x}_{1,l}+\mu_{21}h_{21}\bold{x}_{2,l}+\bold{n}_{1,l}\\
\bold{y}_{2,l}&=\mu_{22}h_{22}\bold{x}_{2,l}+\bold{n}_{2,l},
\end{split}
\end{equation}
\noindent where $\left\{\bold{y}_{i,l},\bold{n}_{i,l}\right\}$ denote the received vector and the noise vector at RX$i$, respectively. The noise vectors are modeled as complex Gaussian random vectors with i.i.d. entries, i.e., $\bold{n}_{1,l}, \bold{n}_{2,l}\sim{\mathcal{CN}}(\bold{0},\bold{I}_T)$ for $l=1,\cdots,L$. They are also assumed to be temporally white. We use $\left\{h_{i,j}:\;i=1,2 \mbox{ and } j=1,2\right\}$ for the channel gain between transmitter $i$ and receiver $j$. The channel gains are i.i.d complex Gaussian random variables with zero mean and unit variance. They are assumed to remain constant over the $L$ transmission rounds and change to new independent values with each new information message. We use this assumption to quantify the diversity gain of the ARQ protocol when no temporal diversity is exploited. We use a per-block power constraint such that $E\left[\frac{1}{T}||x_{i,l}||^2\right]\leq\rho$, i.e., the constraint on the {\it{average}} transmitted power in each transmission round of the ARQ protocol is the same. The parameter $\rho$ takes on the meaning of {\it{average}} SNR per receiver antenna. We also parameterize the attenuation of transmit signal $i$ at receiver $j$ using the real-valued coefficients $\mu_{ij}>0$. To simplify our results, we set $\mu_{11}^2=\mu_{22}^2=1$ and $\mu_{21}^2=\rho^{\beta-1}$. The parameter $\beta$ represents the interference level, $\beta\geq0$.

The decoder is allowed to process the received signal over the $l$ transmission rounds, hence it is convenient to work instead with the following accumulated received vectors
\begin{equation}
\begin{split}
\bold{\tilde{y}}_{1,l}&=h_{11}\bold{\tilde{x}}_{1,l}+\sqrt{\rho^{\beta-1}}h_{21}\bold{\tilde{x}}_{2,l}+\bold{\tilde{n}}_{1,l}\\
\bold{\tilde{y}}_{2,l}&=h_{22}\bold{\tilde{x}}_{2,l}+\bold{\tilde{n}}_{2,l},
\end{split}
\end{equation}
\noindent where $\bold{\tilde{y}}_{i,l}=[\bold{y}_{i,1},\bold{y}_{i,2},...,\bold{y}_{i,l}]$, and all other vectors above are similarly defined.

\begin{figure}
	\centering
	\includegraphics[width=70mm, height = 40mm]{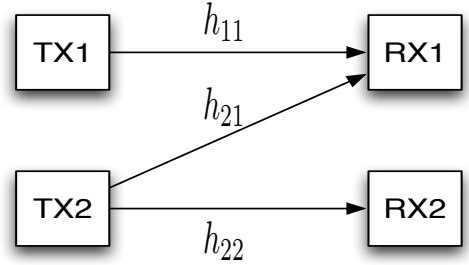}
	\caption{The ZIC model.}
	\label{zmodel}
\vspace{-.2in}
\end{figure}

We consider a family of ARQ protocols that is based on a family of code pairs $\left\{C_1(\rho), C_2(\rho)\right\}$ with first block rates $R_1(\rho)$ and $R_2(\rho)$, respectively, and an overall block length $TL$. The individual error probabilities at RX1 and RX2 are $P_{e_1}(L,\rho)$ and $P_{e_2}(L,\rho)$, respectively. For this family, the first block multiplexing gains $r_1$ and $r_2$ are defined as

\begin{equation}
r_1\triangleq\lim_{{\rho}\rightarrow\infty}\frac{R_1(\rho)}{\log\rho}\qquad\text{and}\qquad r_2\triangleq\lim_{{\rho}\rightarrow\infty}\frac{R_2(\rho)}{\log\rho}.
\end{equation}

Also, the effective ARQ diversity gains at RX1 and RX2 for $L$ transmission rounds are defined as
\begin{equation}
\begin{split}
d_1(L)&\triangleq-\lim_{{\rho}\rightarrow\infty}\frac{\log\left\{P_{e_1}(L,\rho)\right\}}{\log{\rho}}\\
d_2(L)&\triangleq-\lim_{{\rho}\rightarrow\infty}\frac{\log\left\{P_{e_2}(L,\rho)\right\}}{\log{\rho}}.
\end{split}
\end{equation}

Based on the above discussion, we now characterize the long-term average throughput of the ARQ protocol $\eta_1$ for TX1 and $\eta_2$ for TX2. Define a r.v. $\zeta$ as the time between two successive events of sending new information messages by both users. Let ${\cal{A}}_{l}$ denote the event that an ACK is fed back at round $l$ from RX1 and let ${\cal{B}}_{l}$ denote the event that an ACK is fed back at round $l$ from RX2, for $l=1,...,L-1$. Also, let $\overline{\cal{A}}_l$ and $\overline{\cal{B}}_l$ denote the complement events. Thus, we have for $l=1,\cdots,L-1$
\begin{equation}
{\rm{Pr}}\left\{\zeta>l\right\}={\rm{Pr}}\left\{\left(\overline{\cal{A}}_1,\cdots,\overline{\cal{A}}_l\right)\cup\left(\overline{\cal{B}}_1, \cdots,\overline{\cal{B}}_{l}\right)\right\}.
\label{eq:thruput1}
\end{equation}
The expected time between two successive events of sending new information messages (in slots) can be written as
\begin{equation}
\mathbb{E}(\zeta)=\sum_{l=0}^{L-1}{\rm{Pr}}\left\{\zeta>l\right\},
\end{equation}
where, by definition, ${\rm{Pr}}\left\{\zeta>0\right\}=1$. Thus, we have
\begin{equation}
\mathbb{E}(\zeta)= 1+\sum_{l=1}^{L-1}{\rm{Pr}}\left\{\zeta>l\right\}.
\label{}
\end{equation}
Using equation (\ref{eq:thruput1}) and considering the union bound, we have
\begin{equation}
\begin{split}
\mathbb{E}(\zeta)&\leq 1+\sum_{l=1}^{L-1}{\rm{Pr}}\left\{\overline{\cal{A}}_1,\cdots,\overline{\cal{A}}_l\right\}+{\rm{Pr}}\left\{\overline{\cal{B}}_1,\cdots,\overline{\cal{B}}_l\right\}\\
&\leq 1+\sum_{l=1}^{L-1}{\rm{Pr}}\left\{\overline{\cal{A}}_l\right\}+{\rm{Pr}}\left\{\overline{\cal{B}}_l\right\}.
\end{split}
\end{equation}

The average throughput of the ARQ protocol $\eta_1$ for TX1 and $\eta_2$ for TX2 can be characterized as follows.
\begin{equation}
\begin{split}
\eta_1&=\frac{R_1(\rho)}{\mathbb{E}(\zeta)}\\
&\geq\frac{R_1(\rho)}{1+\sum_{l=1}^{L-1}{\rm{Pr}}\left\{\overline{\cal{A}}_l\right\}+{\rm{Pr}}\left\{\overline{\cal{B}}_l\right\}}\\
\eta_2&=\frac{R_2(\rho)}{\mathbb{E}(\zeta)}\\
&\geq\frac{R_2(\rho)}{1+\sum_{l=1}^{L-1}{\rm{Pr}}\left\{\overline{\cal{A}}_l\right\}+{\rm{Pr}}\left\{\overline{\cal{B}}_l\right\}},
\label{eq:throughput}
\end{split}
\end{equation}
Also, by definition, the following relations hold.
\begin{equation}
\eta_1\leq R_1(\rho)\;\;\;\text{and}\;\;\;\eta_2\leq R_2(\rho).
\label{eq:throughput11}
\end{equation}
Then the effective multiplexing rates are defined as
\begin{equation}
\begin{split}
r_{e_1}\triangleq\lim_{\rho\rightarrow\infty}\frac{\eta_{1}(\rho)}{\log(\rho)}\qquad\text{and}\qquad r_{e_2}\triangleq\lim_{\rho\rightarrow\infty}\frac{\eta_{2}(\rho)}{\log(\rho)}.
\end{split}
\end{equation}

Throughout the paper, we use a scheme subscript to distinguish between the diversity gains under the different scenarios considered. So, for example, we use $d_{1,{\rm HK}}(L)$ to denote the diversity gain at RX1 under the HK scheme.

\section{The Non-Cooperative ARQ Protocol}\label{non-coop}
We investigate here the diversity, multiplexing, and delay tradeoff here under the use of a non-cooperative ARQ protocol. We consider the use of the two-message HK scheme at TX2. Specifically, TX2 maintains a private message with rate $S_2=s_2\log\rho$ and a common message with rate $T_2=t_2\log\rho$. Hence, $r_2=s_2+t_2$, $s_2,t_2\geq 0$, and $0\leq r_i\leq 1$. At RX1, we consider a joint typical-set decoder applied to the message of TX1 and the common message of TX2. At RX2, jointly-typical set detection is carried out for both the private and common messages of TX2.  The reason of using joint typical-set decoding here is the need for error detection capabilities to prove achievability of our results. For TX2, we parameterize the ratio of the average private power to the total average power as
\begin{equation}
\alpha=\frac{1}{1+\rho^b}\quad\in[0,1],\quad\quad b\geq 0.
\end{equation}
Thus, the transmitted powers of the common and private messages, in the large-$\rho$ scale, can be written as\footnote{Throughout the work, we will use $\dot{=}$ to denote exponential equality, i.e. $f(z)\dot{=}z^b$ means that $\lim_{z\rightarrow\infty}\frac{\log f(z)}{\log z}=b$, $\dot{\leq}$ and $\dot{\geq}$ are defined similarly.}
\begin{equation}
P_{2,\rm{private}}=\frac{\rho}{1+\rho^b}\qquad\text{and}\qquad P_{2,\rm{common}}\doteq\rho.
\end{equation}

We begin by describing the specifics of the transmission scheme under the HK splitting. When the two transmitters receive a NACK at round $l$, they both begin the transmission of the next block of their current messages. If, on the other hand, TX1 receives a NACK while TX2 receives an ACK, we stipulate that TX2 stops sending the private portion of its message and keeps sending the common part until TX1 receives an ACK. We motivate this transmission policy by observing two special cases of the HK-splitting. The first special case is when TX2 uses the CMO scheme \cite{me}. In this case, it is intuitive that the best that TX2 can do when receiving an ACK while TX1 receives a NACK is to keep sending the same message until TX1 receives an ACK. The reason is that RX1 then performs joint decoding for {\it both} messages from TX1 and TX2. Thus, when TX2 keeps sending the same message, RX1 will accumulate more joint mutual information. Hence reducing the probability of the joint outage event at RX1.

The other special scenario is the TIAN scheme which can be obtained directly from the two-message HK approach by setting $b=0$ and $t_2=0$ \cite{me}. Under the TIAN scheme, and contrary to the CMO counterpart, we expect the diversity at RX1 to improve if TX2 ceases the transmission of its current message when receiving an ACK while TX1 receives a NACK since this provides for less interference. Note that this will not affect the diversity at RX2. The HK scheme with generic splitting parameters lies midways between those two special schemes and it is for this reason that we stipulate the stopping of the private message when a NACK is received at RX1. It is noteworthy that the average transmitted power at TX1 or TX2 will not be affected by either continuing or stopping the transmission of the same message after receiving an ACK and until the other transmitter receives an ACK as the probabilities of such events are very small for the case of the large-$\rho$ scale.

We demonstrated in \cite{me} that the CMO scheme is a singular special case of the HK approach. So, we now state the achievable three dimensional tradeoff of the non-cooperative ARQ protocol under the use of the HK and the CMO approaches as they are distinct.

\begin{theorem}
The Achievable diversity, multiplexing, and delay tradeoff of a two user Rayleigh fading ZIC under the use of the non-cooperative ARQ protocol with a maximum of $L$ transmission rounds for the HK approach and using our transmission policy is
\begin{equation}
\begin{split}
&d_{1,\rm{HK}}(L)=\\
&\underset{i\in\left\{1,2,\cdots,L\right\}}\min\Bigg\{\min\left\{\left[1-\frac{r_2}{i-1}\right]^+,\left[1-\frac{r_2-t_2}{i-1}-b\right]^+\right\}\\
&\qquad\qquad\qquad\qquad\;+\min\left\{d_{11,\rm{HK}}(L,i),d_{12,\rm{HK}}(L,i)\right\}\Bigg\},\\
&\text{where,}\\
&d_{11,\rm{HK}}(L,i)=\\
&\qquad\;\;\max\left\{\left[1-\frac{r_1}{L-i}\right]^+,\left[1-\frac{r_1+i\left[\beta-b\right]^+}{L}\right]^+\right\}\\
&d_{12,\rm{HK}}(L,i)=\\
&\begin{cases}
\left[1-\frac{(r_1+t_2)+i\left[\beta-b\right]^+}{L}\right]^+,\;\text{if}\; r_1+t_2\geq(L-i)\beta+ib>Lb\\
\left[1-\frac{(r_1+t_2)-ib}{L-i}\right]^++\left[\beta-\frac{(r_1+t_2)-ib}{L-i}\right]^+,\\
\qquad\qquad\qquad\qquad\text{if}\;\;Lb<r_1+t_2<(L-i)\beta+ib\\
\left[1-\frac{r_1+t_2}{L}\right]^++\left[\beta-\frac{r_1+t_2}{L}\right]^+,\;\;\text{if}\;\;r_1+t_2\leq Lb.
\end{cases}\\
&\text{And,}\\
&d_{2,\rm{HK}}(L)=\min\left\{\left[1-\frac{r_2}{L}\right]^+,\left[1-\frac{r_2-t_2}{L}-b\right]^+\right\}.
\end{split}
\label{eq:HKDMT}
\end{equation}

While the achievable tradeoff under the CMO scheme is given by
\begin{equation}
\begin{split}
&d_{1,\rm{CMO}}(L)=\\
&\min\left\{\left[1-\frac{r_1}{L}\right]^+,\left[1-\frac{r_1+r_2}{L}\right]^++\left[\beta-\frac{r_1+r_2}{L}\right]^+\right\}
\end{split}
\label{eq:CMO1DMT}
\end{equation}
\begin{equation}
d_{2,\rm{CMO}}(L)=\left[1-\frac{r_2}{L}\right]^+.
\label{eq:CMO2DMT}
\end{equation}

\end{theorem}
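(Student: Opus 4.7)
The plan is to follow the large-SNR outage framework of Zheng--Tse combined with the ARQ accumulation argument of \cite{ArqHesham}, conditioning on the random round $i$ at which TX2's decoder first succeeds. Throughout I would parametrize the channel gains by the exponents $v_{ij}\triangleq-\log|h_{ij}|^2/\log\rho$ and use that the joint density of $(v_{11},v_{21},v_{22})$ is exponentially equal to $\rho^{-(v_{11}+v_{21}+v_{22})}$ on $\mathbb{R}_+^3$. A preliminary step is the standard reduction from the error event to the outage event: since joint typical-set decoding is used, undetected errors are polynomially smaller in $\rho$ than the outage probability, so $P_{e_i}\doteq P_{\rm out,i}(L)$ for $i=1,2$.

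For RX2 the analysis is self-contained because its observation depends only on $X_2$. I would write the binding HK rate constraints after $L$ accumulated blocks, namely $r_2\le L(1-v_{22})^+$ for joint decoding of the common plus private and $r_2-t_2\le L(1-b-v_{22})^+$ for the private alone, then minimize $v_{22}\ge 0$ over the union of their complements to obtain $d_{2,\rm{HK}}(L)$ as the minimum of the two component exponents. Specializing to CMO ($s_2=0,\,t_2=r_2$) collapses this to the single constraint $r_2\le L(1-v_{22})^+$ and yields $d_{2,\rm{CMO}}(L)=[1-r_2/L]^+$.

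The core of the argument is at RX1. Conditioned on TX2 first acknowledging at round $i$, the interference at RX1 contains both the common (power $\doteq\rho$) and the private (power $\doteq\rho^{1-b}$) in rounds $1,\dots,i$, while only the common survives in rounds $i+1,\dots,L$ under the stated transmission policy. I would express the accumulated mutual information at RX1 as a sum of these $i$ and $L-i$ single-block contributions, extract the per-block SNR exponents from the effective noise variance $1+\rho^{\beta-b-v_{21}}$, and write the two HK decoding constraints $r_1\le I(X_1;Y_1|X_{2c})$ and $r_1+t_2\le I(X_1,X_{2c};Y_1)$. Taking the infimum of $v_{11}+v_{21}$ over the outage region of each constraint produces $d_{11,\rm{HK}}(L,i)$ and $d_{12,\rm{HK}}(L,i)$ respectively, and the conditional RX1 outage exponent is their minimum. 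The probability that TX2 first decodes at round $i$ is in turn dominated by the probability that RX2 was still in outage after $i-1$ accumulated blocks, whose exponent is obtained from the RX2 step with $L$ replaced by $i-1$; this produces the $\min\{[1-r_2/(i-1)]^+,[1-(r_2-t_2)/(i-1)-b]^+\}$ prefix in the statement. Adding exponents and minimizing over $i\in\{1,\dots,L\}$ yields $d_{1,\rm{HK}}(L)$. The CMO counterpart $d_{1,\rm{CMO}}(L)$ is proved by a direct version of the same argument in which TX2 is purely common throughout, so the accumulated mutual information at RX1 is simply $L$ times the single-block MAC information and the $i$-partition is no longer needed.

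The step I expect to be the main obstacle is the inner minimization defining $d_{12,\rm{HK}}(L,i)$. The piecewise structure of the per-block mutual information introduces positive-part clippings on both the useful signal exponent $\max(1-v_{11},\beta-v_{21})$ and on the private-interference exponent $\beta-b-v_{21}$, so the infimum over $(v_{11},v_{21})$ splits into the three regimes in $r_1+t_2$ relative to $Lb$ and $(L-i)\beta+ib$ that appear in the statement. Tracking which terms dominate in each sub-region, verifying that the candidate optimizers lie in the feasible set $v_{11},v_{21}\ge 0$, and confirming which per-block constraint binds is the main bookkeeping task; achievability is then completed by standard HK random-coding arguments and a matching outage-based converse of the type used in \cite{ArqHesham}.
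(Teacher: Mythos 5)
Your proposal is correct and follows essentially the same route as the paper: reduce error to outage via joint typical-set decoding, get $d_{2,\rm HK}(L)$ from the two RX2 HK constraints, condition on the round $i$ at which TX2 first gets an ACK (with $\Pr$ of that event $\doteq\rho^{-d_{2,\rm HK}(i-1)}$), minimize $\gamma_{11}+\gamma_{21}$ over the two conditional RX1 outage regions to get $d_{11,\rm HK}(L,i)$ and $d_{12,\rm HK}(L,i)$, take the dominant term over $i$, and treat CMO directly without the $i$-partition. The only piece of the paper's proof you omit is the closing verification that the effective throughputs satisfy $\eta_1\doteq R_1$ and $\eta_2\doteq R_2$ (so $r_{e_i}=r_i$), which is needed for the multiplexing leg of the stated tradeoff but follows immediately from the outage exponents you already compute together with the union bound on $\mathbb{E}(\zeta)$; also note no converse is claimed or required, so your appeal to a matching converse is unnecessary.
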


\begin{proof}
Following in the footsteps of \cite{ArqHesham}, it is immediate to show that the individual error probabilities are exponentially equal to their respective outage probabilities for sufficiently large $T$. This can be qualitatively illustrated as follows. The use of joint typical-set decoding limits the probability of the undetected error event at any round $l\leq L$ to an arbitrarily small value. Following the same techniques in \cite{TseDiv}, it can be directly shown that the probability of decoding failure at round $l=L$ at either RX1 or RX2 is exponentially equal to the probability of the corresponding outage event at the end of the $L$ transmission rounds. Thus, we have for $i=1,2$

\begin{equation}
\begin{split}
P_{e_i}(L,\rho)&\doteq\rho^{-d_{i,\rm{HK}}(L)}\\
&\doteq P_{\rm{out,i}}(L,\rho)\\
&\doteq\rho^{-d_{\rm{out,i}}(L)},
\end{split}
\label{eq:erroroutage}
\end{equation}
where $P_{\rm{out,i}}(L)$ is the individual outage probability at RXi. Note that $d_{\rm{out,i}}(L)$ denotes the diversity gain associated with $P_{\rm{out,i}}(L)$.

We then derive the individual outage probabilities for the non-cooperative ARQ-ZIC system. When the accumulated mutual information over the consecutive rounds at RX1(RX2) is smaller than the first block rate $R_1(R_2)$, an outage occurs. It was shown in \cite{ArqHesham} that it is sufficient, without loss of optimality, to assume that the input codewords are Gaussian distributed. Thus, the mutual information is identical over the protocol rounds. Let us redefine $\overline{\cal{A}}_l$ and $\overline{\cal{B}}_l$ as the outage events at RX1 and RX2 at round $l$, respectively. For the HK approach, the outage region at RX2 at round $l$ can be written as
\begin{equation}
\begin{split}
\overline{\cal{B}}_l=\Bigg\{h_{22}:\;&l\log\left(1+|h_{22}|^2\rho\right)<{R_2},\\
&\text{or}\;\;\;l\log\left(1+|h_{22}|^2\rho\right)<{T_2},\\
&\text{or}\;\;\;l\log\left(1+\frac{|h_{22}|^2\rho}{1+\rho^b}\right)<{R_2-T_2}\Bigg\}.
\end{split}
\label{eq:outage2}
\end{equation}

Notice that the outage event $l\left(1+|h_{22}|^2\rho\right)<{T_2}$ is a subset of the outage event $l\log\left(1+|h_{22}|^2\rho\right)<{R_2}$. Hence, it can be eliminated. Therefore, the high-$\rho$ approximation of the outage region at RX2 at round $l=L$ can be given by
\begin{equation}
\begin{split}
\overline{\cal{B}}_L=\left\{\gamma_{22}:\;L\left[1-\gamma_{22}\right]^+<r_2,\;\;\text{or}\;\;L\left[1-\gamma_{22}-b\right]^+<r_2-t_2\right\}.
\end{split}
\label{eq:outage2-HSNR}
\end{equation}

Following similar analysis as in \cite{me}, the outage probability at RX2 at round $l=L$, $P_{\rm{out,2}}(L)={\rm{Pr}}(\overline{\cal{B}}_L)$, can be shown to be as follows.
\begin{equation}
\begin{split}
P_{\rm{out,2}}(L)&\doteq\rho^{-\min\left\{\left[1-\frac{r_2}{L}\right]^+,\left[1-\frac{r_2-t_2}{L}-b\right]^+\right\}}\\
&\doteq\rho^{-d_{2,\rm{HK}}(L)}.
\end{split}
\label{eq:O2}
\end{equation}
Thus, we have
\begin{equation}
d_{2,\rm{HK}}(L)=\min\left\{\left[1-\frac{r_2}{L}\right]^+,\left[1-\frac{r_2-t_2}{L}-b\right]^+\right\}.
\label{eq:d2_HK_L}
\end{equation}

We define ${\cal{C}}_i$ as the event that TX2 receives an ACK at round $i$ and receives a NACK at round $i-1$, thus, ${\cal{C}}_i=\left\{\overline{\cal{B}}_{i-1},{\cal{B}}_i\right\}$. Notice that a NACK at round $i-1$ implies a NACK at every round $l<i-1$. The outage region at RX1 given ${\cal{C}}_i$ at round $l$ can be written as
\begin{equation}
\begin{split}
\overline{\cal{A}}_l|{\cal{C}}_i&= \Bigg\{h_{11},h_{21}:\;i\log\left(1+\frac{|h_{11}|^2\rho}{1+\frac{|h_{21}|^2\rho^\beta}{1+\rho^b}}\right)\\
&\qquad\qquad+(l-i)\log\left(1+|h_{11}|^2\rho\right)<R_1\\
&\text{or}\;\; i\log\left(1+\frac{|h_{11}|^2\rho+|h_{21}|^2\rho^\beta}{1+\frac{|h_{21}|^2\rho^\beta}{1+\rho^b}}\right)+\\
&(l-i)\log\left(1+|h_{11}|^2\rho+|h_{21}|^2\rho^\beta\right)<R_1+T_2\Bigg\}.
\end{split}
\label{eq:outage1}
\end{equation}

The outage probability at RX1 at round $l=L$ can be derived as follows.
\begin{equation}
\begin{split}
P_{\rm{out,1}}(L)&=\sum_{i=1}^{L}{\rm{Pr}}(\overline{\cal{A}}_L|{\cal{C}}_i){\rm{Pr}}({\cal{C}}_i)\\
&\doteq\rho^{-d_{1,\rm{HK}}(L)}.
\end{split}
\label{eq:O1}
\end{equation}
Using the outage events given in (\ref{eq:outage1}), we show in the Appendix that
\begin{equation}
{\rm{Pr}}(\overline{\cal{A}}_L|{\cal{C}}_i)\doteq\rho^{-\min\left\{d_{11,\rm{HK}}(L,i),d_{12,\rm{HK}}(L,i)\right\}},
\label{ai}
\end{equation}
where,
\begin{equation}
\begin{split}
&d_{11,\rm{HK}}(L,i)=\\
&\max\left\{\left[1-\frac{r_1}{L-i}\right]^+,\left[1-\frac{r_1+i\left[\beta-b\right]^+}{L}\right]^+\right\}\\
&d_{12,\rm{HK}}(L,i)=\\
&\begin{cases}
\left[1-\frac{(r_1+t_2)+i\left[\beta-b\right]^+}{L}\right]^+,\;\text{if}\;r_1+t_2\geq(L-i)\beta+ib>Lb\\
\left[1-\frac{(r_1+t_2)-ib}{L-i}\right]^++\left[\beta-\frac{(r_1+t_2)-ib}{L-i}\right]^+,\\
\qquad\qquad\qquad\qquad\qquad\text{if}\;\; Lb<r_1+t_2<(L-i)\beta+ib\\
\left[1-\frac{r_1+t_2}{L}\right]^++\left[\beta-\frac{r_1+t_2}{L}\right]^+,\qquad\text{if}\;\;r_1+t_2\leq Lb.
\end{cases}
\end{split}
\label{eq:d_11,d_21(L,i)}
\end{equation}

The probability of the event ${\cal{C}}_i$ can be derived as follows.
\begin{equation}
\begin{split}
{\rm{Pr}}({\cal{C}}_i)&={\rm{Pr}}(\overline{\cal{B}}_{i-1},{\cal{B}}_i)\\
&={\rm{Pr}}(\overline{\cal{B}}_{i-1}){\rm{Pr}}({\cal{B}}_i|\overline{\cal{B}}_{i-1})\\
&\doteq {\rm{Pr}}(\overline{\cal{B}}_{i-1})\\
&\doteq\rho^{-d_{2,\rm{HK}}(i-1)},
\end{split}
\label{ci}
\end{equation}
where, ${\rm{Pr}}({\cal{B}}_i|\overline{\cal{B}}_{i-1})\doteq1$. Thus, using (\ref{ai}) and (\ref{ci}) in (\ref{eq:O1}), we get
\begin{equation}
P_{out,1}(l)\doteq\sum_{i=1}^{L}\rho^{-\left\{d_{2,\rm{HK}}(i-1)+\min\{d_{11,\rm{HK}}(l,i),d_{12,\rm{HK}}(l,i)\}\right\}}.
\label{eq:P_out_1}
\end{equation}

In the high-$\rho$ scale, the term that dominates the previous summation is the one with the minimum negative exponent. Using (\ref{eq:O1}) and (\ref{eq:P_out_1}), we have
\begin{equation}
\begin{split}
d_{1,\rm{HK}}(L)=&\\
\underset{i\in\left\{1,2,\cdots,L\right\}}\min&\Bigg\{\min\left\{\left[1-\frac{r_2}{i-1}\right]^+,\left[1-\frac{r_2-t_2}{i-1}-b\right]^+\right\}\\
&+\min\left\{d_{11,\rm{HK}}(L,i),d_{12,\rm{HK}}(L,i)\right\}\Bigg\},
\end{split}
\end{equation}
where $d_{11,\rm{HK}}(L,i)$ and $d_{12,\rm{HK}}(L,i)$ are as given in (\ref{eq:d_11,d_21(L,i)}).

Now for the CMO scheme, the outage regions at RX1 and RX2 at round $l$ can be given as follows.
\begin{equation}
\begin{split}
\overline{\cal{A}}_l=\Bigg\{&h_{11},h_{21}:l\log\left(1+|h_{11}|^2\rho\right)<R_1,\\
&\text{or}\;\;\;l\log\left(1+|h_{11}|^2\rho+|h_{21}|^2\rho^{\beta}\right)<R_1+R_2\Bigg\}
\end{split}
\label{eq:outage1CMO}
\end{equation}
\begin{equation}
\overline{\cal{B}}_l=\left\{h_{22}:l\log\left(1+|h_{22}|^2\rho\right)<R_2\right\}.
\label{eq:outage2CMO}
\end{equation}

The high-$\rho$ approximation of these outage regions at round $l=L$ can be given by
\begin{equation}
\begin{split}
\overline{\cal{A}}_L=\Bigg\{&\gamma_{11},\gamma_{21}:L\left[1-\gamma_{11}\right]^+<r_1,\\
&\text{or}\;\;L\left[\max\left\{\left[1-\gamma_{11}\right]^+,\left[\beta-\gamma_{21}\right]^+\right\}\right]<r_1+r_2\Bigg\}
\end{split}
\label{eq:outage1CMO-HSNR}
\end{equation}
\begin{equation}
\overline{\cal{B}}_L=\left\{\gamma_{22}:L\left[1-\gamma_{22}\right]^+<r_2\right\}.
\label{eq:outage2CMO-HSNR}
\end{equation}

Using these outage regions and following similar steps \cite{me}, we can easily show that the individual diversities of the CMO non-cooperative ARQ ZIC setting with maximum of $L$ transmission rounds can be given by
\begin{equation}
\begin{split}
&d_{1,\rm{CMO}}(L)=\\
&\min\left\{\left[1-\frac{r_1}{L}\right]^+,\left[1-\frac{r_1+r_2}{L}\right]^++\left[\beta-\frac{r_1+r_2}{L}\right]^+\right\}
\end{split}
\end{equation}
\begin{equation}
d_{2,\rm{CMO}}(L)=\left[1-\frac{r_2}{L}\right]^+.
\end{equation}

To complete the achievability proof, we have to show that the effective throughputs $\eta_1$ and $\eta_2$ are exponentially equal to their corresponding first block rates, $R_1$ and $R_2$. Thus, the effective multiplexing gains $r_{e_1}$ and $r_{e_2}$ are equal to the first block multiplexing gains $r_1$ and $r_2$, respectively. Using equations (\ref{eq:throughput}) and (\ref{eq:erroroutage}), and recalling the definitions for $\overline{\cal{A}}_l$ and $\overline{\cal{B}}_l$, we have
\begin{equation}
\begin{split}
\eta_1&\dot{\geq}\frac{R_1}{1+\sum_{l=1}^{L-1}\rho^{-\min\left\{d_{\rm{out,1}}(l),d_{\rm{out,2}}(l)\right\}}}\\
&\dot{=}R_1\\
\eta_2&\dot{\geq}\frac{R_2}{1+\sum_{l=1}^{L-1}\rho^{-\min\left\{d_{\rm{out,1}}(l),d_{\rm{out,2}}(l)\right\}}}\\
&\dot{=}R_2.
\end{split}
\label{eq:th}
\end{equation}
Now, from equations (\ref{eq:throughput11}) and (\ref{eq:th}), we directly have
\begin{equation}
\eta_1\doteq R_1\qquad\text{and}\qquad\eta_2\doteq R_2,
\end{equation}
which yields that
\begin{equation}
r_{e_1}\doteq r_1\qquad\text{and}\qquad r_{e_2}\doteq r_2,
\end{equation}
\end{proof}

We can show the superiority of our transmission policy over other approaches which consider keeping or stopping the transmission of {\it{both}} the common and private messages of TX2 when it receives an ACK while TX1 receives a NACK as follows.

For our transmission policy under no cooperation and using the HK approach, the rate region at RX1 at round $l=L$, $R_{1,\rm{HK-NC}}$, can be expressed as
\begin{equation}
R_{1,\rm{HK-NC}}(L)= \bigcup_{i=1}^{L}\left\{{\cal{A}}_L|{\cal{C}}_i\right\},
\end{equation}
where, ${\cal{C}}_i$ is as previously defined. Also, $\left\{{\cal{A}}_L|{\cal{C}}_i\right\}$ is the rate region at RX1 given ${\cal{C}}_i$ at round $l=L$ which can be written as follows.
\begin{equation}
\begin{split}
&{\cal{A}}_L|{\cal{C}}_i=\\
&\Bigg\{h_{11},h_{21}:\;R_1\leq i\log\left(1+\frac{|h_{11}|^2\rho}{1+\frac{|h_{21}|^2\rho^\beta}{1+\rho^b}}\right)\\
&\qquad\qquad\qquad\qquad+(L-i)\log\left(1+|h_{11}|^2\rho\right),\\
&\qquad\text{and}\;\;R_1+R_2\leq i\log\left(1+\frac{|h_{11}|^2\rho+|h_{21}|^2\rho^\beta}{1+\frac{|h_{21}|^2\rho^\beta}{1+\rho^b}}\right)\\
&\qquad\qquad\qquad\qquad+(L-i)\log\left(1+|h_{11}|^2\rho+|h_{21}|^2\rho^\beta\right)\Bigg\}.
\end{split}
\label{eq:rateregion-OP}
\end{equation}

The rate region at RX1 of the approach which considers continuing the transmission of both the common and private message when TX2 receives an ACK while TX1 receives a NACK can be obtained from $\left\{{\cal{A}}_L|{\cal{C}}_i\right\}$ by setting $i=L$. Therefore, the rate region of this approach is a subset of the rate region of our transmission policy; this ultimately shows the superiority of our transmission policy.

On the other hand, the rate region at RX1 of the approach which considers stopping the transmission of both the common and private message when TX2 receives an ACK while TX1 receives a NACK is obtained from $R_{1,\rm{HK-NC}}(L)$ by removing the common message power $|h_{21}|^2\rho^\beta$ from the second term in the right hand side of the second constraint in (\ref{eq:rateregion-OP}). Thus, it is also a subset of the rate region of our transmission policy.

Notice that the rate regions at RX2 for these two approaches are similar to that of our transmission policy, which can be defined as the complement of the outage region at RX2 at round $l=L$ given in (\ref{eq:outage2-HSNR}).

For the TIAN scheme, substituting with $b=0$ and $t_2=0$ in the equations given in (\ref{eq:HKDMT}) yields
\begin{equation}
\begin{split}
&d_{1,\rm{TIAN}}(L)=\underset{i\in\{1,2,\cdots,L\}}\min\\
&\Bigg\{\left[1-\frac{r_2}{i-1}\right]^++\max\left\{\left[1-\frac{r_1}{L-i}\right]^+,\left[1-\frac{r_1+i\beta}{L}\right]^+\right\}\Bigg\} \end{split}
\end{equation}
\begin{equation}
d_{2,\rm{TIAN}}(L)=\left[1-\frac{r_2}{L}\right]^+.
\label{eq:TIAN2DMT}
\end{equation}
We can show that $i=1$ minimizes the expression for $d_{1,\rm{TIAN}}(L)$ in the above equation, thus, we have
\begin{equation}
d_{1,\rm{TIAN}}(L)=\max\left\{\left[1-\frac{r_1}{L-1}\right]^+,\left[1-\frac{r_1}{L}-\frac{\beta}{L}\right]^+\right\}.
\label{eq:TIAN1DMT}
\end{equation}

\subsection{The Cooperative ARQ Protocol}\label{coop}
We investigate here the achievable tradeoff of two cooperative ARQ schemes. In both schemes, TX2, the interfering link in the ZIC model, assists in relaying the message of TX1 in the event of a NACK reception at TX1. This setting can model a coexistence scenario between a primary link and a secondary link in a cognitive radio setting. The goal of TX2 is to access the wireless medium while preserving the primary transmitter's, TX1, privileged access. We show that the cooperative ARQ schemes significantly improve the diversity of the primary link.

We first consider a static decoding scheme where the decoding scheme at RX1, whether using the CMO or the TIAN decoding, is fixed and determined a priori according to the interference level $\beta$ and the multiplexing gains $r_1$ and $r_2$. Next, we consider a dynamic decoding scheme where RX1 dynamically decides at the beginning of {\it each} new transmission, when the channel gains change, to use either the CMO or the TIAN form of decoding according to the channel gains, the interference level, and the multiplexing gains. 

For the two cooperative ARQ schemes, if TX1 receives a NACK for its message, TX2 will start listening to TX1 to decode its message regardless of its own feedback. We denote the time TX2 takes to decode TX1 message by $T'$. We expect the ZIC-system in those Cooperative ARQ schemes with $L$ ARQ rounds to act as a $2\times1$ MISO ARQ system as the number of retransmission rounds increases. This is evident for the diversity results we report here considering $L=2$ transmission rounds only.

Since $T'$ is the time TX2 takes to accumulate enough mutual information to decode the message from TX1 with first block rate $R_1$, the following relation holds
\begin{equation}
T'=\min\left\{T,\left\lceil \frac{TR_1}{\log_2(1+|h|^2\rho)}\right\rceil\right\},
\label{eq:threshold}
\end{equation}
\noindent where $h$ is the channel gain between TX1 and TX2.

Once TX2 has decoded TX1's message, it will start relaying this message using a codebook $\tilde{C}_1(\rho)$. We denote the codeword used by TX2 to encode the message of TX1 by $\bold{\tilde{x}}_{1,3}$. If TX2 decodes the primary message in $T'$ symbols, then it will assist TX1 by relaying its message in the remaining time of the second transmission round. This means that $\bold{\tilde{x}}_{1,3}$ is a complex vector of length $T-T'$. Based on this communication setup, the received signal at RX1, in the event of a NACK reception by TX1 at the end of the first transmission round, takes the following form depending on the transmission phase.

\begin{enumerate}
\item \textbf{The first transmission phase}\\
In the first transmission round, the received signal at RX1 can be written as
\begin{itemize}
\item For the CMO decoder
\begin{equation}
\begin{split}
&\bold{y}_{1,1}=h_{11}\bold{x}_{1,1}+h_{21}\bold{x}_{2,1}+\bold{n}_{1,1},\\
&\text{where,}\\
&\bold{n}_{1,1}\sim{\mathcal{CN}}(0,I_T).
\label{eq:first_round}
\end{split}
\end{equation}
\item For the TIAN decoder
\begin{equation}
\begin{split}
&\bold{y}_{1,1}=h_{11}\bold{x}_{1,1}+\bold{n'}_{1,1},\\
&\text{where,}\\
&\bold{n'}_{1,1}\sim{\mathcal{CN}}\left(0,I_T\left(1+|h_{21}|^2\rho^{\beta}\right)\right).
\end{split}
\label{eq:first_roundt}
\end{equation}
\end{itemize}

\item \textbf{The listening phase}\\
If a NACK is received by TX1 at the end of the first transmission round, the received signal at RX1 during the listening phase can be written as
\begin{equation}
\begin{split}
&\bold{y}_{1,2}=h_{11}\bold{x}_{1,2}+\bold{n}_{1,2},\\
&\text{where,}\\
&\bold{n}_{1,2}\sim{\mathcal{CN}}(0,I_{T'}).
\end{split}
\label{eq:listen_round}
\end{equation}

\item \textbf{The Cooperation phase}\\
If TX2 decodes the message from TX1 in a time $T'<T$, the received signal at RX1 during the cooperation phase can be written as
\begin{equation}
\begin{split}
&\bold{y}_{1,3}=h_{11}\bold{x}_{1,3}+h_{21}\bold{\tilde{x}}_{1,3}+\bold{n}_{1,3},\\
&\text{where,}\\
&\bold{n}_{1,3}\sim{\mathcal{CN}}(0,I_{T-T'}).
\end{split}
\label{eq:help_round}
\end{equation}
\end{enumerate}

\subsubsection{Cooperative ARQ with Static Decoding}
We characterize here the achievable DMT of the cooperative ARQ with static decoding considering a maximum of two transmission rounds.\footnote{Now that we have fixed delay at two rounds, we focus on the resulting DMT.} Herein, we restrict ourselves to the use of the CMO or the TIAN decoding for simplicity. We use the superscript $^c$ to refer to the cooperation setup. Our results are detailed in the following theorem.

\begin{theorem}
The achievable DMT of the cooperative ARQ with static decoding scheme with maximum of two transmission rounds and under the use of the CMO scheme can be characterized as follows.
\begin{equation}
\begin{split}
&d_{1,\rm{CMO}}^c(2)=\min\left\{d_{11,\rm{CMO}}^c(2),d_{12,\rm{CMO}}^c(2)\right\}\\
&\text{where,}\\
&d_{11,\rm{CMO}}^c(2)=\\
&\begin{cases}
1-\frac{r_{1}}{2},\qquad\text{if}\;\;r_1\geq2\beta\\
\min\left\{1+\frac{(1-r_{1})\beta-r_{1}}{1+r_{1}},2-\frac{3r_{1}}{2}\right\},\;\; \text{if}\;\;\frac{\beta}{1+\beta}\leq r_1<2\beta\\
\min\left\{2-\frac{3r_{1}}{2},2-\frac{\beta r_1}{\beta-r_1},1+\beta-\frac{r_1}{1-r_1}\right\},\;\; \text{if}\;\;r_1<\frac{\beta}{1+\beta}
\end{cases}\\
&d_{12,\rm{CMO}}^c(2)=\left[1-\frac{r_1+r_2}{2}\right]^++\left[\beta_-\frac{r_1+r_2}{2}\right]^+.\\
&\text{And,}\\
&d_{2,\rm{CMO}}^c(2)=\min\left\{d_{21,\rm{CMO}}^c(2),d_{22,\rm{CMO}}^c(2)\right\},\\
&\text{where,}\\
&d_{21,\rm{CMO}}^c(2)=\\
&\min\left\{\left[1-r_1\right]^+,\left[1-r_1-r_2\right]^++\left[\beta-r_1-r_2\right]^+\right\}+\left[1-r_2\right]^+\\
&d_{22,\rm{CMO}}^c(2)=\left[1-\frac{r_2}{2}\right]^+.
\end{split}
\label{eq:COPSTATCMO}
\end{equation}
While for the TIAN scheme, the achievable DMT can be expressed as follows.
\begin{equation}
\begin{split}
&d_{1,\rm{TIAN}}^c(2)=\begin{cases}
&\left[1-\frac{r_1+\beta}{2}\right]^+,\qquad\text{if}\;\;r_1\geq\beta\\
&2\left[1-r_1\right]^+,\qquad\text{if}\;\;r_1<\frac{\beta}{2},\;\beta\geq1\\
&\left[1-r_1\right]^++\left[\beta-r_1\right]^+,\qquad\text{if}\;\;r_1<\frac{\beta}{2},\;\beta<1\\
&\frac{\left(1-r_1\right)\beta}{r_1},\qquad\text{if}\;\;r_1>\frac{1}{2},\;\frac{\beta}{2}\leq r_1<\beta\\
&\left[1-r_1\right]^++\left[\beta-r_1\right]^+,\;\;\text{if}\;\;r_1\leq\frac{1}{2},\;\frac{\beta}{2}\leq r_1<\beta\\
\end{cases}\\
&d_{2,\rm{TIAN}}^c(2)=\min\left\{d_{21,\rm{TIAN}}^c(2),d_{22,\rm{TIAN}}^c(2)\right\}\\
&\text{where,}\\
&d_{21,\rm{TIAN}}^c(2)=\left[1-r_2\right]^++\left[1-r_1-\beta\right]^+\\
&d_{22,\rm{TIAN}}^c(2)=\left[1-\frac{r_2}{2}\right]^+.
\end{split}
\label{eq:COPSTATTIAN}
\end{equation}
The overall achievable DMT curve, either between RX1 diversity and first user multiplexing gain $r_1$ or between RX2 diversity and second user multiplexing gain $r_2$, of the cooperative ARQ with static decoding scheme for $L=2$ is the maximum of the achievable DMT using the CMO and the TIAN approaches.
\end{theorem}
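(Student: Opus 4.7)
The plan is to extend the outage-based analysis from the proof of Theorem~1 to the two-round cooperative setup, with the second round at RX1 replaced by the listen-then-forward schedule described in (\ref{eq:listen_round})--(\ref{eq:help_round}). By the same joint typical-set decoding and Gaussian-codebook argument used earlier, the individual error probabilities are exponentially equal to the corresponding outage probabilities, so it is enough to compute $P_{\rm{out,1}}(2)$ and $P_{\rm{out,2}}(2)$ separately under the CMO and the TIAN decoders, and then take a pointwise maximum over the two decoder choices at the end.

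For RX1, let $\overline{\cal{A}}_1$ denote the round-1 outage event, with ${\rm{Pr}}(\overline{\cal{A}}_1)\doteq \rho^{-d^{(1)}}$, where $d^{(1)}$ is the single-round single-user diversity (joint CMO decoding, or treating interference as noise under TIAN). Conditioned on $\overline{\cal{A}}_1$, TX2 spends $T'=\min\{T,\lceil TR_1/\log(1+|h|^2\rho)\rceil\}$ symbols decoding TX1's message and the remaining $T-T'$ symbols forwarding $\bold{\tilde x}_{1,3}$. Introducing the usual exponent $\gamma=-\log|h|^2/\log\rho$, the fraction $T'/T$ concentrates in the high-$\rho$ limit on $\min\{1,r_1/(1-\gamma)^+\}$, and the accumulated mutual information at RX1 satisfies
\begin{equation*}
I \doteq T\,I_1 + T'\,I_2 + (T-T')\,I_3,
\end{equation*}
where $I_1$, $I_2$, $I_3$ are the per-symbol mutual informations from (\ref{eq:first_round}) or (\ref{eq:first_roundt}), (\ref{eq:listen_round}), and (\ref{eq:help_round}), respectively. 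The outage region at the end of round~2 is $\{I<2TR_1\}$ together with, for CMO, the joint constraint on $R_1+R_2$ inherited from (\ref{eq:outage1CMO}). Minimizing $\gamma_{11}+\gamma_{21}+\gamma$ over this region by the Laplace-type asymptotic argument of \cite{TseDiv} yields the two branches $d_{11,\rm{CMO}}^c$ and $d_{12,\rm{CMO}}^c$ of the CMO expression, and the five piecewise branches of $d_{1,\rm{TIAN}}^c$.

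For RX2, the second round is consumed by relaying \emph{only} when $\overline{\cal{A}}_1$ occurs, so splitting on this event and using ${\rm{Pr}}({\cal{A}}_1)\doteq 1$ gives
\begin{equation*}
P_{\rm{out,2}}(2) \doteq {\rm{Pr}}(\overline{\cal{A}}_1)\,{\rm{Pr}}(\overline{\cal{B}}_1) + {\rm{Pr}}(\overline{\cal{B}}_2),
\end{equation*}
where the first summand (TX1 failed, so RX2 collects only one round of its own codeword) supplies the $d_{21,\rm{CMO}}^c$ and $d_{21,\rm{TIAN}}^c$ branches after identifying ${\rm{Pr}}(\overline{\cal{A}}_1)$ with the single-round RX1 outage from (\ref{eq:outage1CMO-HSNR}) or its TIAN analog, and the second summand gives $d_{22,\cdot}^c=[1-r_2/2]^+$. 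The minimum of the two exponents produces $d_{2,\rm{CMO}}^c$ and $d_{2,\rm{TIAN}}^c$. Because static decoding fixes the decoder a priori, the overall achievable tradeoff is the pointwise maximum over the two decoder choices, which proves the final claim.

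The main obstacle is the RX1 computation: since $T'$ is the ceiling of a ratio of channel-dependent quantities, the round-2 accumulated mutual information is piecewise linear in the joint exponents $(\gamma_{11},\gamma_{21},\gamma)$, and the active faces of this piecewise region flip as $r_1$ and $\beta$ cross certain thresholds that depend on how $T'/T$ saturates at either $1$ or $r_1/(1-\gamma)$. Carefully identifying these regimes and solving the resulting linear programs in $(\gamma_{11},\gamma_{21},\gamma)$ is exactly what produces the three-way split in $d_{11,\rm{CMO}}^c$ and the five-way split in $d_{1,\rm{TIAN}}^c$; by comparison, the RX2 analysis and the max-combining step are mechanical.
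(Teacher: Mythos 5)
Your route is structurally the same as the paper's: reduce error probability to outage probability via the joint-typicality/large-$T$ argument, describe the two-round RX1 outage through the three-phase accumulated mutual information with the random listening fraction $f=T'/T$ tied to the TX1--TX2 channel exponent (the paper's event ${\cal F}_{T'}$ with $u=1-r_1/f$), minimize $\gamma_{11}+\gamma_{21}+u$ over the outage region, and for RX2 split on the round-1 ACK/NACK at RX1 to get ${\rm Pr}(\overline{\cal A}_1){\rm Pr}(\overline{\cal B}_1)+{\rm Pr}(\overline{\cal B}_2)$, then take the maximum over the two static decoders. Two problems remain. First, your outage condition is stated as $\{I<2TR_1\}$; this is wrong. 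The ARQ message carries $TR_1$ bits no matter how many rounds are used (which is precisely why the effective multiplexing gain stays at $r_1$ while diversity grows), so the correct condition is $I<TR_1$, i.e. the per-channel-use accumulated mutual information compared against $R_1$ (and against $R_1+R_2$ for the joint CMO constraint), exactly as in the paper's regions ${\cal O}_1$, ${\cal O}_2$, ${\cal O}_3$. With the threshold $2TR_1$ every exponent would come out with $r_1$ replaced by $2r_1$ and would not reproduce the stated $d_{11,\rm CMO}^c(2)$, $d_{12,\rm CMO}^c(2)$, or $d_{1,\rm TIAN}^c(2)$.

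Second, the actual content of the theorem --- the three-branch expression for $d_{11,\rm CMO}^c(2)$ and the five-branch expression for $d_{1,\rm TIAN}^c(2)$ --- comes entirely from the optimization you defer to a closing remark. The paper carries it out explicitly: for fixed $f$ it solves the linear program over $(\gamma_{11},\gamma_{21})$ from the constraint-region geometry (cases $r_1\geq 2\beta$, $(1-f)\beta\leq r_1<2\beta$, $r_1<(1-f)\beta$, and their TIAN analogues), and then minimizes over $f\in[r_1,1]$ using monotonicity of $2-\frac{r_1}{2}-\frac{r_1}{f}$ and $2+\beta-\frac{r_1}{1-f}-\frac{r_1}{f}$, concavity of $2+\frac{(1-f)\beta-r_1}{1+f}-\frac{r_1}{f}$ (so it is minimized at an endpoint), and the regime boundary $f=1-\frac{r_1}{\beta}$; for the sum-rate branch both candidates are increasing in $f$ and are minimized at $f=r_1$, giving $d_{12,\rm CMO}^c(2)=[1-\frac{r_1+r_2}{2}]^++[\beta-\frac{r_1+r_2}{2}]^+$. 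Asserting that the minimization ``yields the branches'' leaves the piecewise formulas unverified, and this is where essentially all of the paper's proof effort lies. Your RX2 analysis (including the observation that a round-1 NACK at RX1 costs user 2 its retransmission opportunity, giving the $d_{21}$ branches, versus $[1-\frac{r_2}{2}]^+$ otherwise) and the final max-over-decoders step do match the paper and are indeed mechanical; it would also be worth stating explicitly, as the paper does, that the event in which TX2 decodes TX1's message incorrectly before relaying is among the events made negligible by the large-$T$ argument.
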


\begin{proof}
For the cooperative ARQ with static decoding, error at RX1 ${\cal{E}}_1$ is comprised of the following events.
\begin{enumerate}
\item $\left\{{\cal{E}}_{1,{\cal{E}}_{12}}\right\}$ denotes the error event at RX1 when TX2 makes an error in decoding the first user message.
\item $\left\{{\cal{E}}_{1,\overline{\cal{E}}_{12}}\right\}$ denotes the error event at RX1 when TX2 decodes the first user message correctly. This event can be expressed as the union of the two following events.
    \begin{itemize}
    \item $\left\{{\cal{E}}_1,{\cal{A}}_{1}\right\}$ denotes the event of an undetected decoding error at RX1 at the end of round  1.
    \item $\left\{{\cal{E}}_1,\overline{\cal{A}}_1\right\}$ denotes the event of a decoding failure at RX1 at the end of the first transmission round. It can be written as the union of the two following events. A decoding failure $\left\{{\cal{E}}_1,\overline{\cal{A}}_1,\overline{\cal{A}}_2\right\}$ and an undetected decoding error $\left\{{\cal{E}}_1,\overline{\cal{A}}_1,{\cal{A}}_2\right\}$ at RX1 at the end of the second transmission round.
    \end{itemize}
\end{enumerate}
Therefore, the error at RX1 occurs due to the events $\left\{{\cal{E}}_{1,{\cal{E}}_{12}}\right\}$, $\left\{{\cal{E}}_1,{\cal{A}}_{1}\right\}$, $\left\{{\cal{E}}_1,\overline{\cal{A}}_1,{\cal{A}}_2\right\}$, and $\left\{{\cal{E}}_1,\overline{\cal{A}}_1,\overline{\cal{A}}_2\right\}$.

Similar to the work in \cite{ArqHesham} and \cite{Kambiz}, it can be shown that for a sufficiently large block length $T$ the events $\left\{{\cal{E}}_{1,{\cal{E}}_{12}}\right\}$, $\left\{{\cal{E}}_1,{\cal{A}}_{1}\right\}$, and $\left\{{\cal{E}}_1,\overline{\cal{A}}_1,{\cal{A}}_2\right\}$ can be made arbitrary small. Thus, the error event at RX1 ${\cal{E}}_1$ is dominated by the event $\left\{{\cal{E}}_1,\overline{\cal{A}}_1,\overline{\cal{A}}_2\right\}$ which corresponds to an outage event at RX1 at the end of the second transmission round $\left\{\overline{\cal{A}}_2\right\}$. Therefore, the error probability at RX1 is exponentially equal to the probability of the outage event $\left\{\overline{\cal{A}}_2\right\}$.

We now derive the outage probability at RX1 at the end of the second transmission round for the CMO scheme. Let us state the corresponding outage event as follows.
\begin{equation}
\overline{\cal{A}}_2=\left\{{\cal{F}}_{T'},\left\{{\cal{O}}_1\cup{\cal{O}}_2\right\}\right\},
\end{equation}
where,
\begin{equation}
{\cal{F}}_{T'}=\left\{h: \frac{T'}{T}\log\left(1+|h|^2\rho\right)=R_1\right\}
\label{eq:COP-STAT-F-CMO}
\end{equation}
\begin{equation}
\begin{split}
{\cal{O}}_1=&\Bigg\{h_{11},h_{21}:\;\frac{T+T'}{T}\log\left(1+|h_{11}|^2\rho\right)+\\
&\frac{T-T'}{T}\log\left(1+|h_{11}|^2\rho+|h_{21}|^2\rho^{\beta}\right)<R_1\Bigg\}
\end{split}
\label{eq:COP-STAT-OUT1-CMO}
\end{equation}
\begin{equation}
\begin{split}
&{\cal{O}}_2=\Bigg\{h_{11},h_{21}:\\
&\frac{T}{T}\log\left(1+|h_{11}|^2\rho+|h_{21}|^2\rho^{\beta}\right)+\frac{T'}{T}\log\left(1+|h_{11}|^2\rho\right)\\
&+\frac{T-T'}{T}\log\left(1+|h_{11}|^2\rho+|h_{21}|^2\rho^{\beta}\right)<R_1+R_2\Bigg\}.
\end{split}
\label{eq:COP-STAT-OUT2-CMO}
\end{equation}

Defining $f=\frac{T'}{T}$, $|h_{ij}|^2=\rho^{-\gamma_{ij}}$, and $|h|^2=\rho^{-u}$, the high-$\rho$ approximation of the previous outage events can be written as
\begin{equation}
{\cal{F}}_{T'}=\left\{u: f\left[1-u\right]^+=r_1\right\}
\label{eq:COP-STAT-F-CMO-HSNR}
\end{equation}
\begin{equation}
\begin{split}
{\cal{O}}_1=\Bigg\{&\gamma_{11},\gamma_{21},f:\;\left(1+f\right)\left[1-\gamma_{11}\right]^+\\
&+\left(1-f\right)\max\left\{\left[1-\gamma_{11}\right]^+,[\beta-\gamma_{21}]^+\right\}<r_1\Bigg\}
\end{split}
\label{eq:COP-STAT-OUT1-CMO-HSNR}
\end{equation}
\begin{equation}
\begin{split}
{\cal{O}}_2=\Bigg\{&\gamma_{11},\gamma_{21},f:\;\left(2-f\right)\max\left\{\left[1-\gamma_{11}\right]^+,[\beta-\gamma_{21}]^+\right\}\\
&+f\left[1-\gamma_{11}\right]^+<r_1+r_2\Bigg\},
\end{split}
\label{eq:COP-STAT-OUT2-CMO-HSNR}
\end{equation}
where,
\begin{equation}
u=1-\frac{r_1}{f}.
\label{eq:u}
\end{equation}

Since we have $\underset{f\in[r_1,1]}\max{\rm{Pr}}({\cal{O}}_1)\doteq\rho^{-d_{11,\rm{CMO}}^c(2)}$, thus,
\begin{equation}
d_{11,\rm{CMO}}^c(2)=\underset{\gamma_{11},\gamma_{21},u\in{\cal{O}}_1}\min\left\{\gamma_{11}+\gamma_{21}+u\right\}
\end{equation}

The shaded regions in Fig. \ref{fig:proof5-a} show the constraint regions of $\gamma_{11}$ and $\gamma_{21}$ for the cases $r_1\geq2\beta$, $(1-f)\beta\leq r_1<2\beta$, and $r_1<(1-f)\beta$. Thus,
\begin{equation}
\begin{split}
&d_{11,\rm{CMO}}^c(2)=\underset{f\in[r_1,1]}\min\\
&\begin{cases}
2-\frac{r_1}{2}-\frac{r_1}{f},\qquad\text{if}\;\; r_1\geq2\beta\\
2+\frac{(1-f)\beta-r_1}{1+f}-\frac{r_1}{f},\;\;\text{if}\;\;(1-f)\beta\leq r_1<2\beta\\
2+\beta-\frac{r_1}{1-f}-\frac{r_1}{f},\;\;\text{if}\;\;r_1<(1-f)\beta
\end{cases}
\end{split}
\end{equation}
The function $2-\frac{r_1}{2}-\frac{r_1}{f}$ is monotonically increasing in $f$, thus, its minimum is at $f=r_1$. For $r_1<2\beta$, we have
\begin{equation}
\begin{split}
d_{11,\rm{CMO}}^c(2)=\underset{f\in[r_1,1]}\min\begin{cases}
2+\frac{(1-f)\beta-r_1}{1+f}-\frac{r_1}{f},\;\;\text{if}\;\;\;f\geq 1-\frac{r_1}{\beta}\\
2+\beta-\frac{r_1}{1-f}-\frac{r_1}{f},\;\;\text{if}\;\;\; f< 1-\frac{r_1}{\beta}
\end{cases}
\end{split}
\end{equation}
The function $2+\frac{(1-f)\beta-r_1}{1+f}-\frac{r_1}{f}$ is a concave function over $f\in[r_1,1]$, hence, it is minimized at the edges. Thus, for $r_1\geq 1-\frac{r_1}{\beta}$, the function $2+\frac{(1-f)\beta-r_1}{1+f}-\frac{r_1}{f}$ is minimized at $f=r_1$ or $f=1$. On the other hand, when $r_1<1-\frac{r_1}{\beta}$, it is minimized at $f=1-\frac{r_1}{\beta}$ or $f=1$. The function $2+\beta-\frac{r_1}{1-f}-\frac{r_1}{f}$ is also monotonically increasing in $f$ over $f\in[r_1,1]$ and is minimized at $f=r_1$. Notice that the condition $f<1-\frac{r_1}{\beta}$ implies that $r_1<1-\frac{r_1}{\beta}$ since $r_1\leq f\leq 1$. Based on the above arguments we have

\begin{figure}
\centering
\subfigure[$r_1\geq2\beta$]{
   \includegraphics[width=76mm, height = 50mm] {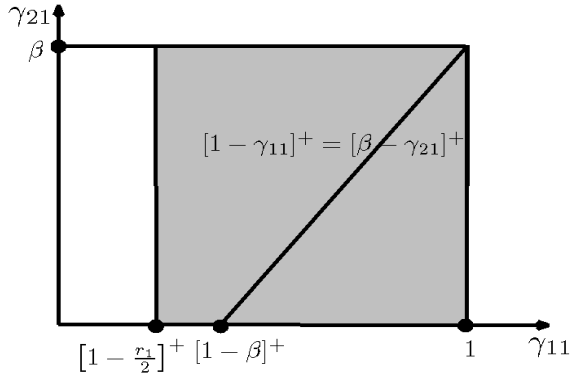}
 }
\subfigure[$(1-f)\beta\geq r_1<2\beta$]{
   \includegraphics[width=76mm, height = 50mm] {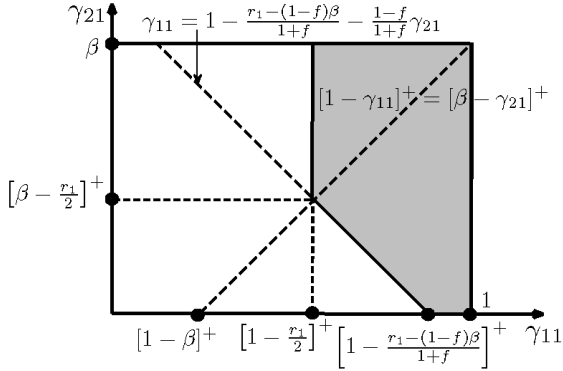}
 }
\subfigure[$r_1<(1-f)\beta$]{
   \includegraphics[width=76mm, height = 50mm] {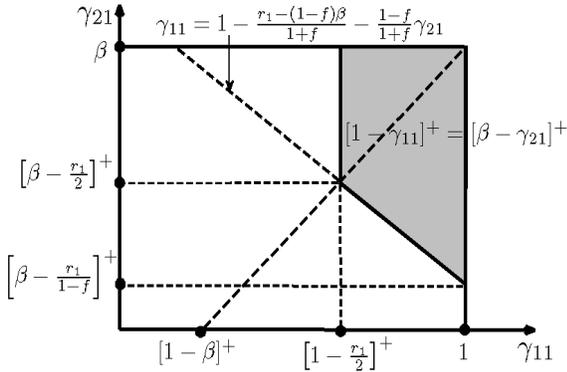}
 }
\caption{Constraint regions of $d_{11,\rm{CMO}}^c(2)$}
\label{fig:proof5-a}
\end{figure}

\begin{equation}
\begin{split}
&d_{11,\rm{CMO}}^c(2)=\\
&\begin{cases}
1-\frac{r_{1}}{2},\qquad\text{if}\;\; r_1\geq2\beta\\
\min\left\{1+\frac{(1-r_{1})\beta-r_{1}}{1+r_{1}},2-\frac{3r_{1}}{2}\right\},\;\;\text{if}\;\;\frac{\beta}{1+\beta}\leq r_1<2\beta\\
\min\left\{2-\frac{3r_{1}}{2},2-\frac{\beta r_1}{\beta-r_1},1+\beta-\frac{r_1}{1-r_1}\right\},\;\; \text{if}\;\;r_1<\frac{\beta}{1+\beta}
\end{cases}
\end{split}
\end{equation}

Similarly, we have $\underset{f\in[r_1,1]}\max{\rm{Pr}}({\cal{O}}_2)\doteq\rho^{-d_{12,\rm{CMO}}^c(2)}$, thus,
\begin{equation}
d_{12,\rm{CMO}}^c(2)=\underset{\gamma_{11},\gamma_{21},u\in{\cal{O}}_2}\min\left\{\gamma_{11}+\gamma_{21}+u\right\}
\end{equation}

The constraint regions of $\gamma_{11}$ and $\gamma_{21}$ for different values of $r_1$ are shown in Fig. \ref{fig:proof5-b}. Thus we have
\begin{equation}
\begin{split}
&d_{12,\rm{CMO}}^c(2)=\underset{f\in[r_1,1]}\min\\
&\begin{cases}
2-\frac{r_1+r_2}{2}-\frac{r_1}{f},\qquad\text{if}\;\;r_1+r_2\geq2\beta\\
\left[1-\frac{r_1+r_2}{2}\right]^++\left[\beta-\frac{r_1+r_2}{2}\right]^++1-\frac{r_1}{f},\;\;\text{if}\;\; r_1+r_2<2\beta.
\end{cases}
\end{split}
\label{eq:d12CMO_f}
\end{equation}
The above two functions are both monotonically increasing in $f$. Therefore, they both are minimized at $f=r_1$. Which yields
\begin{equation}
d_{12,\rm{CMO}}^c(2)=\left[1-\frac{r_1+r_2}{2}\right]^++\left[\beta-\frac{r_1+r_2}{2}\right]^+.
\end{equation}

\begin{figure}
\centering
\subfigure[$r_1+r_2\geq2\beta$]{
   \includegraphics[width=76mm, height = 50mm] {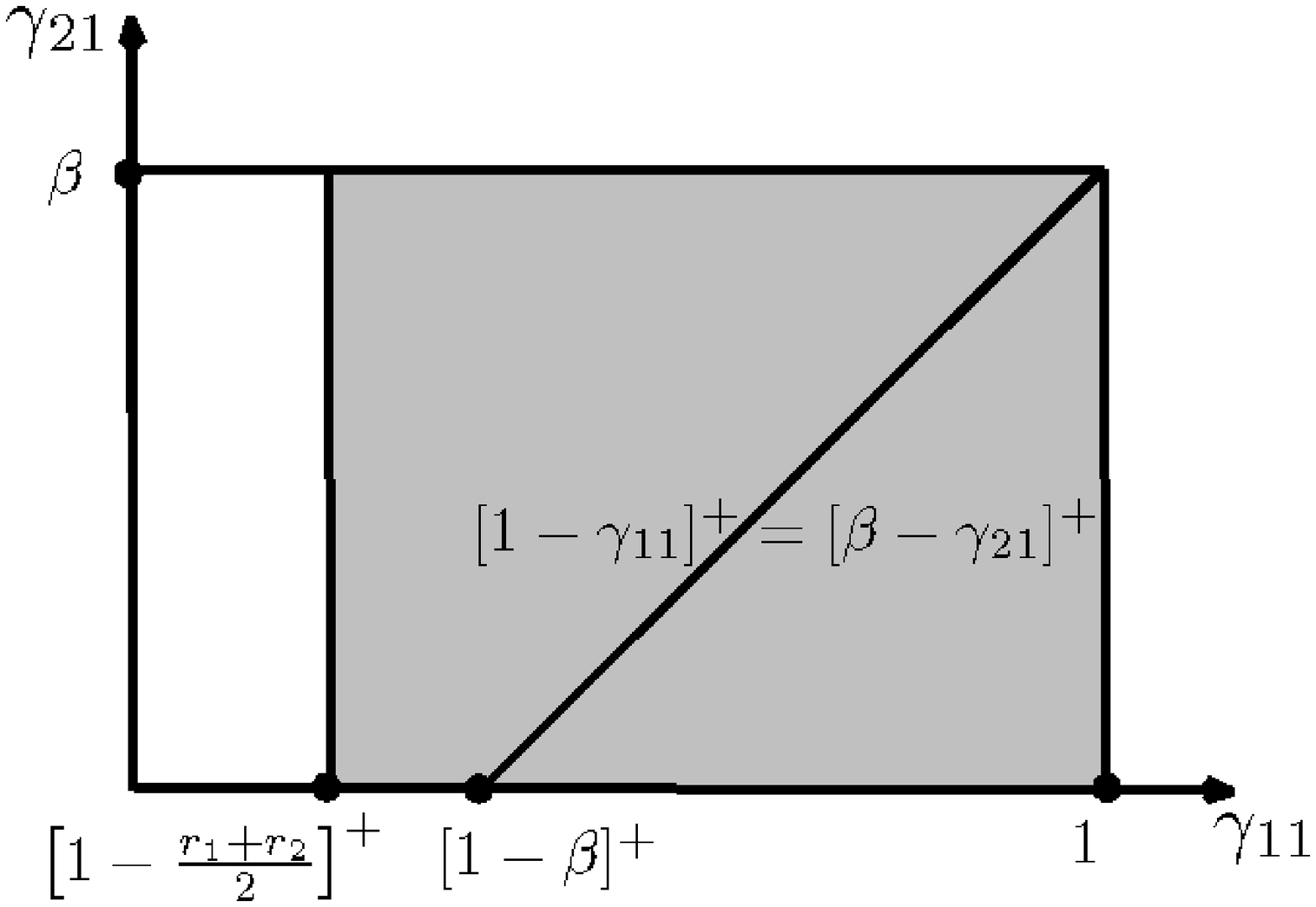}
 }
\subfigure[$(2-f)\beta\leq r_1+r_2<2\beta$]{
   \includegraphics[width=76mm, height = 50mm] {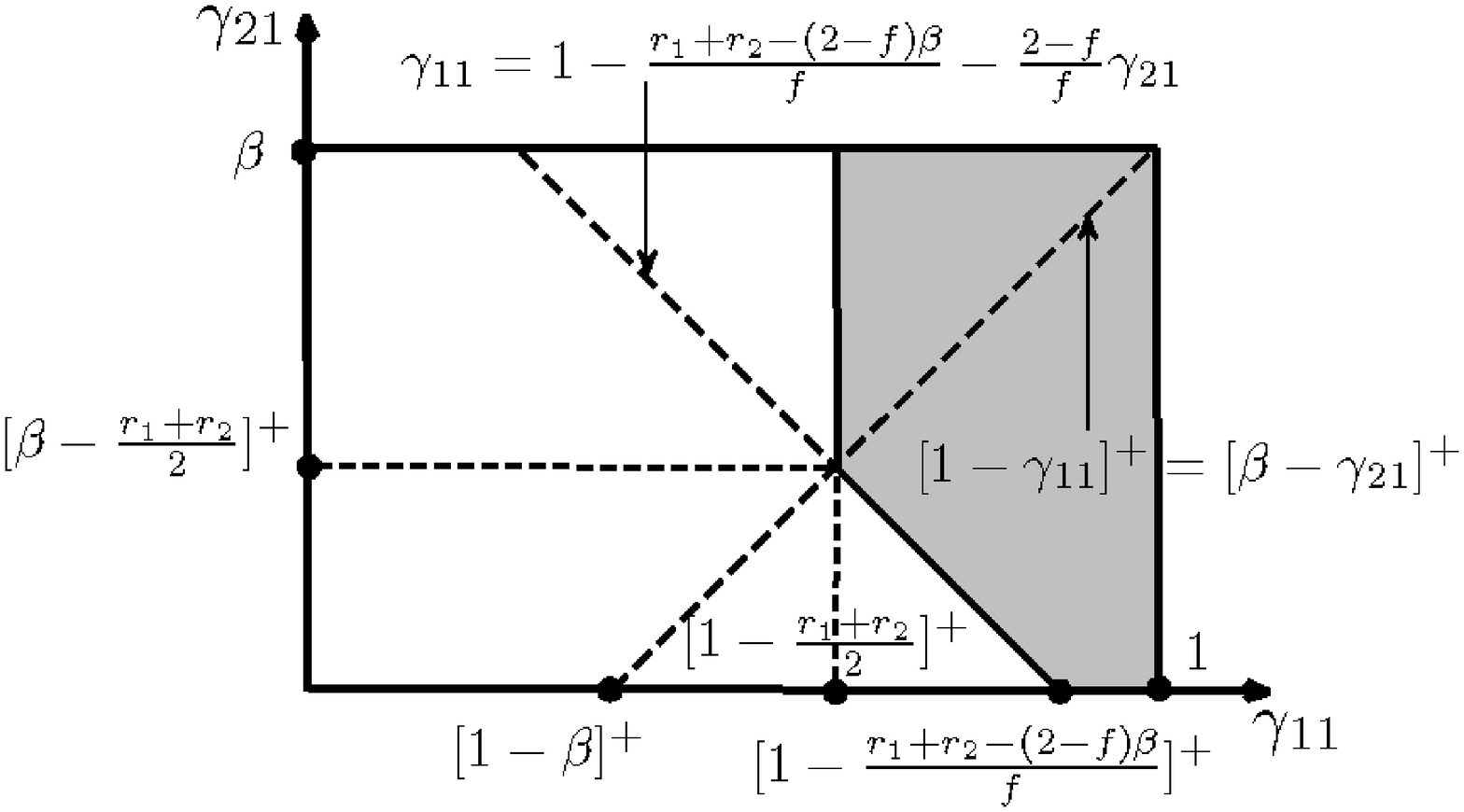}
 }
\subfigure[$r_1+r_2<(2-f)\beta$]{
   \includegraphics[width=76mm, height = 50mm] {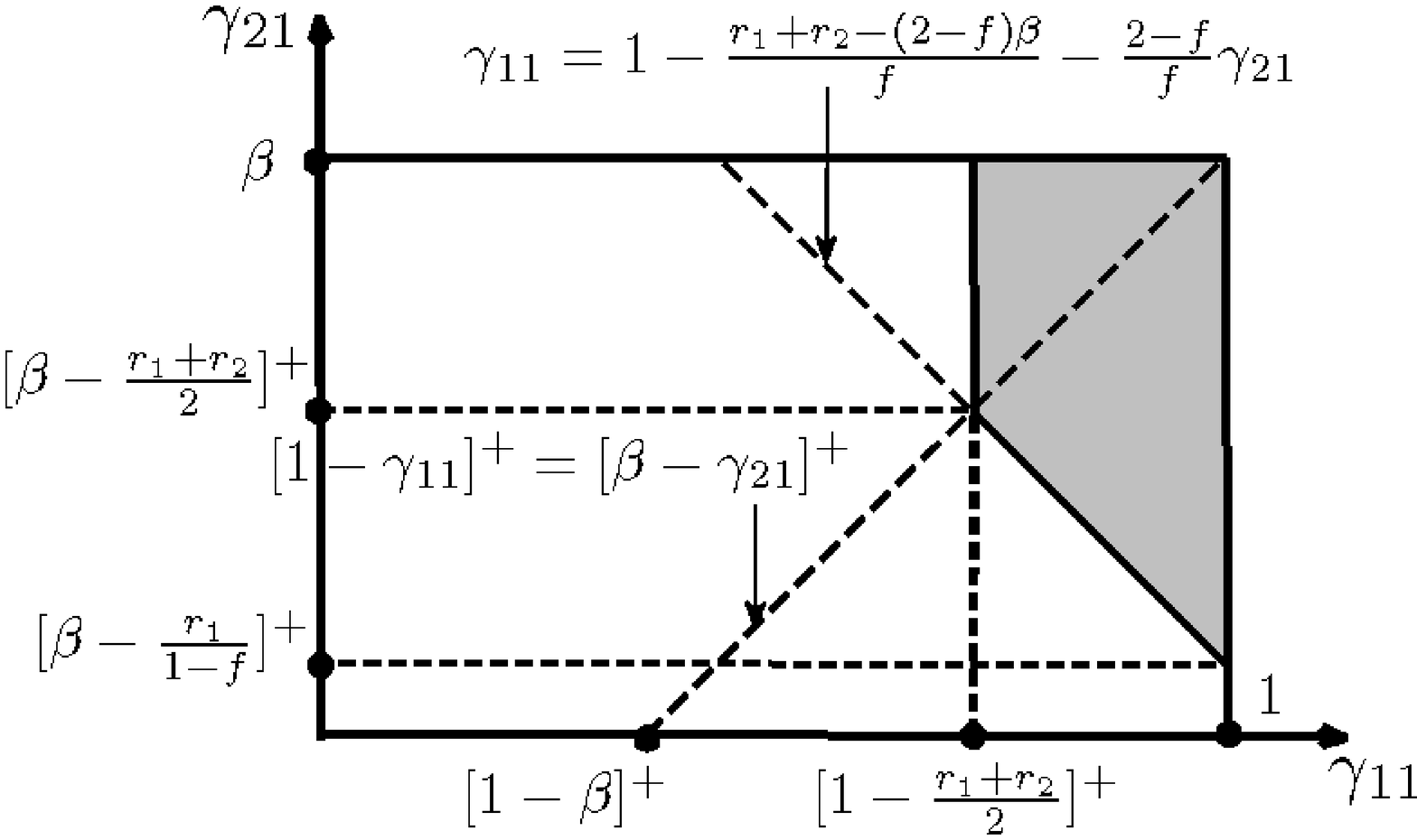}
 }
\caption{Constraint regions of $d_{12,\rm{CMO}}^c(2)$}
\label{fig:proof5-b}
\end{figure}

For the TIAN scheme, on the other hand, the outage event at RX1 at the end of the second transmission round can be expressed as follows.
\begin{equation}
\overline{\cal{A}}_2=\left\{{\cal{F}}_{T'},{\cal{O}}_3\right\},
\end{equation}
where,
\begin{equation}
\begin{split}
{\cal{O}}_3=\Bigg\{&h_{11},h_{21}:\\
&\;\log\left(1+\frac{|h_{11}|^2\rho}{1+|h_{21}|^2\rho^{\beta}}\right)+\frac{T'}{T}\log\left(1+|h_{11}|^2\rho\right)\\
&+\frac{T-T'}{T}\log\left(1+|h_{11}|^2\rho+|h_{21}|^2\rho^\beta\right)<R_1\Bigg\},
\end{split}
\label{eq:COP-STAT-OUT-TIAN}
\end{equation}
which can be reduced, in the high-$\rho$ limit, to
\begin{equation}
\begin{split}
{\cal{O}}_3=\Bigg\{&\gamma_{11},\gamma_{21},f:\\
&\;\left[1-\gamma_{11}-\left[\beta-\gamma_{21}\right]^+\right]^++f\left[1-\gamma_{11}\right]^+\\
&+(1-f)\max\left\{[1-\gamma_{11}]^+,[\beta-\gamma_{21}]^+\right\}<r_1\Bigg\},
\end{split}
\label{eq:COP-STAT-OUT-TIAN-HSNR}
\end{equation}

Similarly, we have
\begin{equation}
\begin{split}
{\rm{Pr}}\left(\overline{\cal{A}}_2\right)&=\underset{f\in[r_1,1]}\max{\rm{Pr}}({\cal{O}}_3)\\
&\doteq\rho^{-d_{1,\rm{TIAN}}^c(2)}.
\end{split}
\end{equation}

The shaded regions in Fig. \ref{fig:proof5-c} show the constraint regions of $\gamma_{11}$ and $\gamma_{21}$ for the cases $r_1\geq\beta$, $(1-f)\beta\leq r_1<\beta$, and $r_1<(1-f)\beta$. Thus, we have
\begin{equation}
\begin{split}
&d_{1,\rm{TIAN}}^c(2)=\underset{f\in[r_1,1]}\min\\
&\begin{cases}
\left[1-\frac{r_1+\beta}{2}\right]^++1-\frac{r_1}{f},\qquad\text{if}\;\; r_1\geq\beta\\
\min\left\{[1-r_1]^++[\beta-r_1]^+,[1-\frac{r_1-(1-f)\beta}{f}]^+\right\}+1-\frac{r_1}{f},\\
\qquad\qquad\qquad\qquad\qquad\text{if}\;\;(1-f)\beta\leq r_1<\beta\\
\min\left\{[1-r_1]^++[\beta-r_1]^+,\left[1+\beta-\frac{r_1}{1-f}\right]^+\right\}+1-\frac{r_1}{f},\\
\qquad\qquad\qquad\qquad\qquad\text{if}\;\;r_1<(1-f)\beta.
\end{cases}
\end{split}
\label{eq:d1TIAN_f}
\end{equation}
Through similar optimization over $f$ as in the CMO scheme, it can be shown that RX1 diversity of the TIAN scheme under the cooperative ARQ setting with a maximum of two transmission rounds is given by
\begin{equation}
\begin{split}
&d_{1,\rm{TIAN}}^c(2)=\\
&\begin{cases}
\left[1-\frac{r_1+\beta}{2}\right]^+,\qquad\text{if}\;\;r_1\geq\beta\\
2\left[1-r_1\right]^+,\qquad\text{if}\;\;r_1\leq\frac{\beta}{2},\;\beta\geq1\\
\left[1-r_1\right]^++\left[\beta-r_1\right]^+,\qquad\text{if}\;\;r_1\leq\frac{\beta}{2},\;\beta<1\\
\frac{\left(1-r_1\right)\beta}{r_1},\qquad\text{if}\;\;r_1>\frac{1}{2},\;\frac{\beta}{2}<r_1<\beta\\
\left[1-r_1\right]^++\left[\beta-r_1\right]^+,\qquad\text{if}\;\;r_1\leq\frac{1}{2},\;\frac{\beta}{2}<r_1<\beta.
\end{cases}
\end{split}
\end{equation}

\begin{figure}
\centering
\subfigure[$r_1\geq\beta$]{
   \includegraphics[width=76mm, height = 50mm] {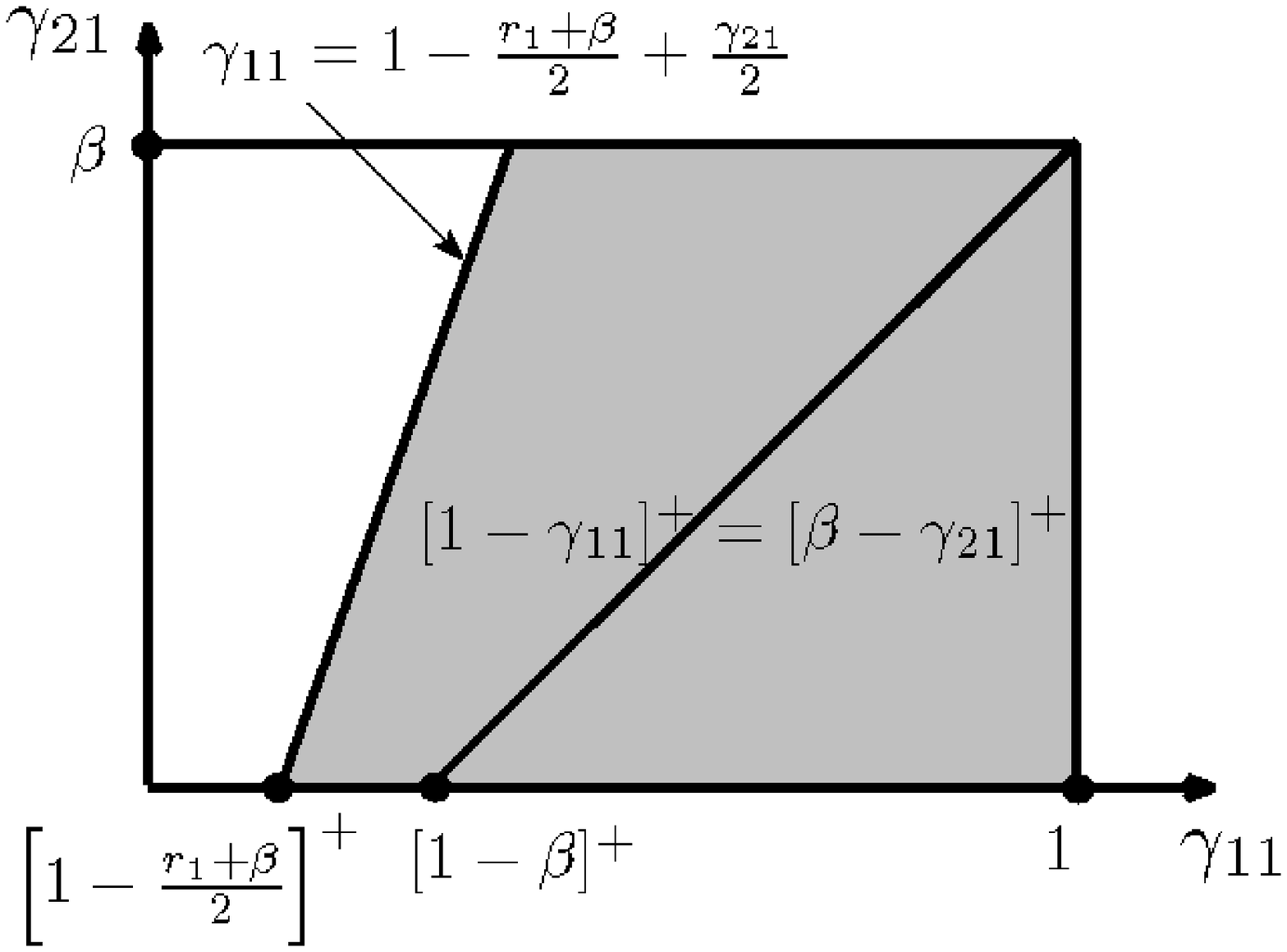}
 }
\subfigure[$(1-f)\beta\leq r_1<\beta$]{
   \includegraphics[width=76mm, height = 50mm] {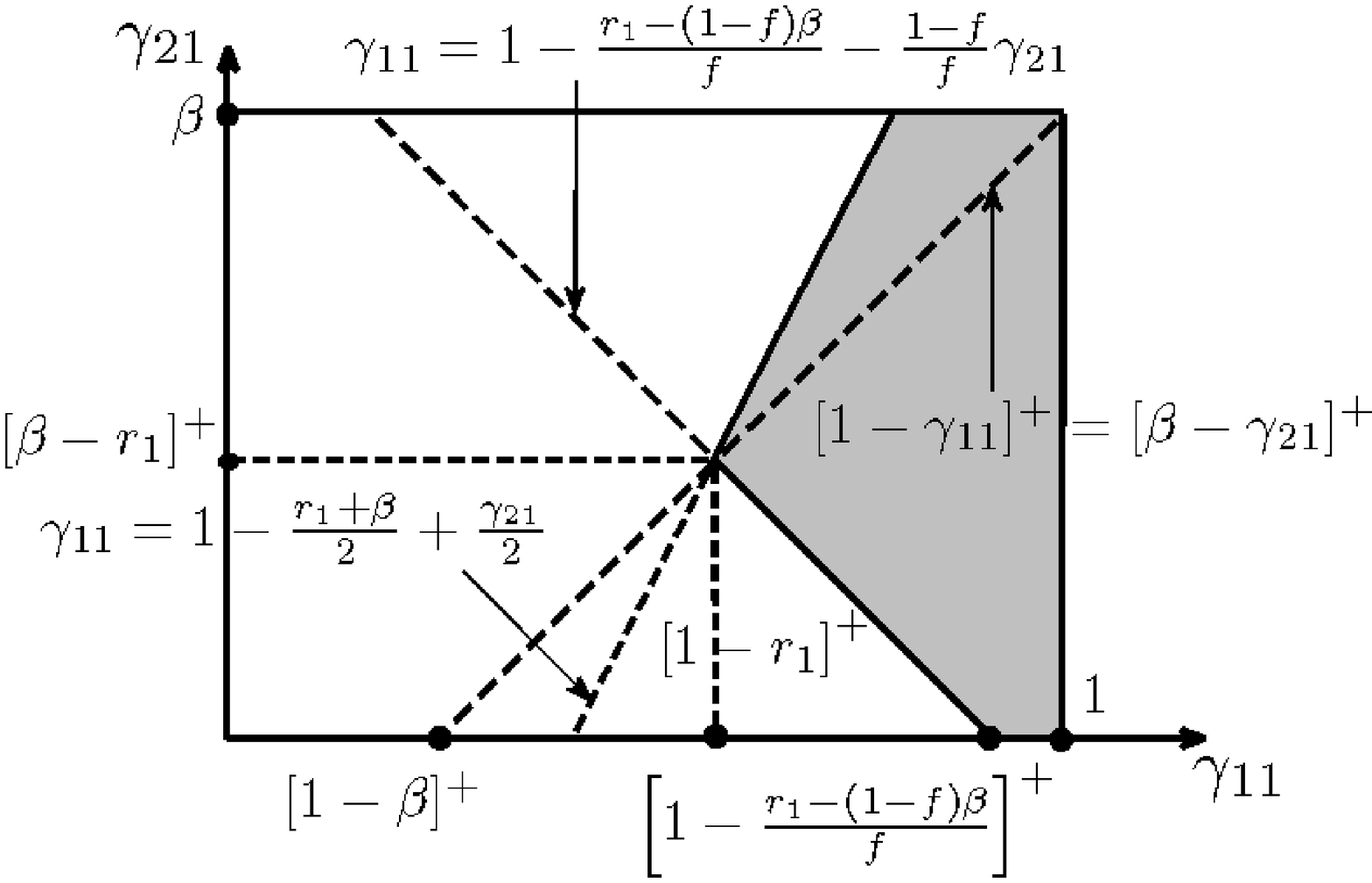}
 }
\subfigure[$r_1<(1-f)\beta$]{
   \includegraphics[width=76mm, height = 50mm] {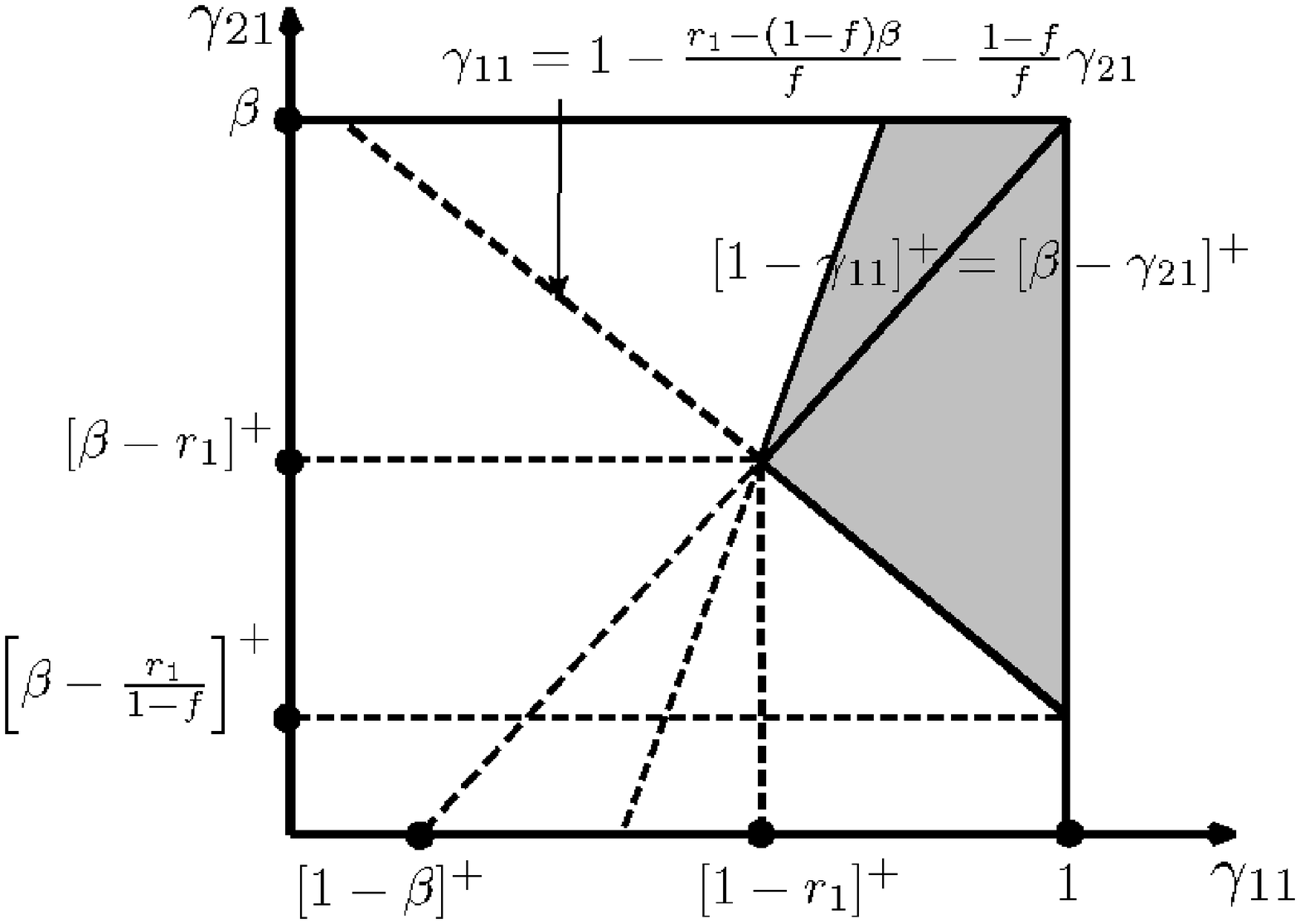}
 }
\caption{Constraint regions of $d_{1,\rm{TIAN}}^c(2)$}
\label{fig:proof5-c}
\end{figure}

Now, for both the CMO and TIAN schemes, the error event at RX2 $\left\{{\cal{E}}_2\right\}$ can be decomposed to the following events.
\begin{enumerate}
\item $\left\{{\cal{E}}_2,{\cal{A}}_{1}\right\}$ denotes the error event at RX2 when RX1 receives an ACK at the end of the first transmission round. This event is the union of three events; an undetected decoding error event at the end of the first transmission round $\left\{{\cal{E}}_2,{\cal{A}}_{1},{\cal{B}}_{1}\right\}$, a decoding failure event at the end of the second transmission round $\left\{{\cal{E}}_2,{\cal{A}}_{1},\overline{\cal{B}}_2\right\}$, and an undetected decoding error event at the end of the second transmission round $\left\{{\cal{E}}_2,{\cal{A}}_{1},{\cal{B}}_{2}\right\}$.
\item $\left\{{\cal{E}}_2,\overline{\cal{A}}_1\right\}$ denotes the error event at RX2 when RX1 receives a NACK at the end of the first transmission round. This event can be expressed as the union of two events; an undetected decoding error event $\left\{{\cal{E}}_2,\overline{\cal{A}}_1,{\cal{B}}_1\right\}$ and a decoding failure event  $\left\{{\cal{E}}_2,\overline{\cal{A}}_1,\overline{\cal{B}}_1\right\}$ at the end of the first transmission round.
\end{enumerate}

It can be shown that the dominating error events are the events $\left\{{\cal{E}}_2,{\cal{A}}_{1},\overline{\cal{B}}_2\right\}$ and $\left\{{\cal{E}}_2,\overline{\cal{A}}_1,\overline{\cal{B}}_1\right\}$. Thus,
\begin{equation}
\begin{split}
{\rm{Pr}}\left({\cal{E}}_2\right)&\doteq {\rm{Pr}}\left(\overline{\cal{A}}_1,\overline{\cal{B}}_1\right)+{\rm{Pr}}\left({\cal{A}}_1,\overline{\cal{B}}_2\right)\\
&\doteq {\rm{Pr}}\left(\overline{\cal{A}}_1\right) {\rm{Pr}}\left(\overline{\cal{B}}_1\right)+{\rm{Pr}}\left(\overline{\cal{B}}_2\right),
\end{split}
\label{eq:E_2}
\end{equation}
\noindent as the events $\overline{\cal{A}}_1$ and $\overline{\cal{B}}_1$ are independent, as well as the events ${\cal{A}}_1$ and $\overline{\cal{B}}_2$. Also, ${\rm{Pr}}({\cal{A}}_1)\doteq 1$.

For the CMO special case, it was shown in \cite{me} that the outage probabilities at RX1 and RX2 are given by
\begin{equation}
{\rm{Pr}}\left(\overline{\cal{A}}_1\right)\doteq\rho^{-\min\left\{\left[1-r_1\right]^+,\left[1-r_1-r_2\right]^++\left[\beta-r_1-r_2\right]^+\right\}}
\label{eq:A_1Bar}
\end{equation}
\begin{equation}
{\rm{Pr}}\left(\overline{\cal{B}}_1\right)\doteq\rho^{-[1-r_2]^+}.
\label{eq:B_1Bar}
\end{equation}
Also, using equation (\ref{eq:outage2CMO-HSNR}) with $L=2$ we have
\begin{equation}
{\rm{Pr}}\left(\overline{\cal{B}}_2\right)\doteq\rho^{-\left[1-\frac{r_2}{2}\right]^+}.
\label{eq:B_2Bar}
\end{equation}
Using equations (\ref{eq:A_1Bar}), (\ref{eq:B_1Bar}), and (\ref{eq:B_2Bar}) in (\ref{eq:E_2}), we have
\begin{equation}
\begin{split}
&{\rm{Pr}}\left({\cal{E}}_2\right)\doteq\\
&\rho^{-\min\left\{\min\left\{\left[1-r_1\right]^+,\left[1-r_1-r_2\right]^++\left[\beta-r_1-r_2\right]^+\right\}+[1-r_2]^+,\left[1-\frac{r_2}{2}\right]^+\right\}}\\
&\qquad\;\;\;\doteq\rho^{-d_{2,\rm{CMO}}^c(2)}.
\end{split}
\end{equation}
Thus,
\begin{equation}
\begin{split}
&d_{2,\rm{CMO}}^c(2)=\min\\
&\Bigg\{\min\left\{\left[1-r_1\right]^+,\left[1-r_1-r_2\right]^++\left[\beta-r_1-r_2\right]^+\right\}\\
&\qquad\qquad\qquad+[1-r_2]^+,\left[1-\frac{r_2}{2}\right]^+\Bigg\}.
\end{split}
\end{equation}

Using similar arguments, it is an easy matter to show that RX2 diversity of the TIAN special case under the cooperative ARQ setting with $L=2$ is given by
\begin{equation}
d_{2,\rm{TIAN}}^c(2)=\min\left\{[1-r_1-\beta]^++[1-r_2]^+,\left[1-\frac{r_2}{2}\right]^+\right\}.
\end{equation}
\end{proof}

\subsection{Cooperative ARQ with Dynamic Decoding}
We consider here a dynamic decoder as follows. Each time {\it{both}} TX1 and TX2 begin to transmit new messages, RX1 decides to use either the CMO or the TIAN decoding according to the channel conditions revealed to it, $h_{11}$ and $h_{21}$. The decoding scheme is no longer known a priori but is dynamically decided each time users transmit new messages. It is worthwhile noticing that the second transmitter has no CSI to dynamically change its splitting parameters according to the channel conditions. The first receiver RX1 has to determine either to decode the whole information (interference) sent by TX2 or to treat it as additive noise. We now state the achievable tradeoff of this approach with maximum of two transmission rounds in the following theorem.

\begin{theorem}
The achievable DMT of the cooperative ARQ with dynamic decoding scheme for $L=2$ is characterized as follows.
\begin{equation}
\begin{split}
&\qquad\qquad d_{1,\rm{DD}}^c(2)=\min\left\{d_{11,\rm{DD}}^c(2),d_{12,\rm{DD}}^c(2)\right\},\\
&\text{where,}\\
&d_{11,\rm{DD}}^c(2)=d_{11,\rm{CMO}}^c(2)\\
&d_{12,\rm{DD}}^c(2)=\\
&\begin{cases}
d_{1,\rm{TIAN}}^c(2),\qquad r_2\geq\beta\\
d_{12,\rm{CMO}}^c(2),\qquad r_2<\beta,\;r_1\geq r_2\\
\left[\beta-\frac{(2r_1-1)r_2}{r_1}\right]^+,\qquad r_2<\beta,\;\frac{1}{2}\leq r_1<r_2\\
[1-r_1]^++[\beta-r_1]^+,\;\;\;r_2<\beta,\;r_1<\min\left\{\frac{1}{2},r_2\right\}.
\end{cases}\\
&\text{And,}\\
&\qquad\qquad d_{2,\rm{DD}}^c(2)=\min\left\{d_{21,\rm{DD}}^c(2),d_{22,\rm{DD}}^c(2)\right\},\\
&\text{where,}\\
&d_{21,\rm{DD}}^c(2)=\left[1-r_2\right]^++\max\left\{d_{1,\rm{CMO}}(1),d_{1,\rm{TIAN}}(1)\right\}\\
&d_{22,\rm{DD}}^c(2)=\left[1-\frac{r_2}{2}\right]^+,
\end{split}
\label{eq:DDdiversities}
\end{equation}
where, $d_{1,\rm{CMO}}(1)$, $d_{1,\rm{TIAN}}(1)$, $d_{12,\rm{CMO}}^c(2)$, and $d_{1,\rm{TIAN}}^c(2)$ are as given in (\ref{eq:CMO1DMT}), (\ref{eq:TIAN1DMT}), (\ref{eq:COPSTATCMO}), and (\ref{eq:COPSTATTIAN}), respectively.
\end{theorem}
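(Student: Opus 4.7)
The plan is to adapt the outage-based analysis of the preceding (static-decoder) theorem, exploiting the fact that a dynamic decoder at RX1 declares outage only when \emph{both} the CMO and the TIAN decoding rules fail simultaneously. The joint typical-set argument of \cite{ArqHesham} once again makes the error probabilities at RX1 and RX2 exponentially equal to the end-of-second-round outage probabilities, so the task reduces to computing SNR exponents of outage regions. Writing the RX1 outage region at round $l=2$ as $\{{\cal F}_{T'}, ({\cal O}_1\cup{\cal O}_2)\cap{\cal O}_3\}$, with ${\cal O}_1,{\cal O}_2,{\cal O}_3$ denoting the high-$\rho$ regions in (\ref{eq:COP-STAT-OUT1-CMO-HSNR}), (\ref{eq:COP-STAT-OUT2-CMO-HSNR}), and (\ref{eq:COP-STAT-OUT-TIAN-HSNR}), distributivity yields $({\cal O}_1\cap{\cal O}_3)\cup({\cal O}_2\cap{\cal O}_3)$, which is precisely the announced decomposition $d_{1,\rm{DD}}^c(2)=\min\{d_{11,\rm{DD}}^c(2),d_{12,\rm{DD}}^c(2)\}$.

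To obtain $d_{11,\rm{DD}}^c(2)=d_{11,\rm{CMO}}^c(2)$ I would show ${\cal O}_1\subseteq{\cal O}_3$. Subtracting the left-hand side of the ${\cal O}_3$ constraint from that of the ${\cal O}_1$ constraint gives $[1-\gamma_{11}]^+-[1-\gamma_{11}-[\beta-\gamma_{21}]^+]^+\geq 0$, so every triple $(\gamma_{11},\gamma_{21},f)$ that lies in ${\cal O}_1$ also lies in ${\cal O}_3$. Hence ${\cal O}_1\cap{\cal O}_3={\cal O}_1$, and its SNR exponent is exactly the $d_{11,\rm{CMO}}^c(2)$ computed in the static-decoder theorem.

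The main work lies in computing $d_{12,\rm{DD}}^c(2)$ from ${\cal O}_2\cap{\cal O}_3$. I would superimpose the $(\gamma_{11},\gamma_{21})$ constraint sets drawn in Figures~\ref{fig:proof5-b} and \ref{fig:proof5-c}, intersect them, minimize $\gamma_{11}+\gamma_{21}+u=\gamma_{11}+\gamma_{21}+1-r_1/f$ over the intersection, and then optimize over $f\in[r_1,1]$ using the same monotonicity/concavity arguments employed after (\ref{eq:d12CMO_f}) and (\ref{eq:d1TIAN_f}) in the static proof. The four sub-cases in the stated formula correspond to: (i) $r_2\geq\beta$, where the TIAN region dominates the intersection and the exponent collapses to $d_{1,\rm{TIAN}}^c(2)$; (ii) $r_2<\beta$ with $r_1\geq r_2$, where the CMO sum-rate face still binds and the exponent equals $d_{12,\rm{CMO}}^c(2)$; (iii) $r_2<\beta$ with $\tfrac{1}{2}\leq r_1<r_2$, where an interior optimum in $f$ produces the expression $[\beta-(2r_1-1)r_2/r_1]^+$; and (iv) $r_1<\min\{\tfrac{1}{2},r_2\}$, where $f=r_1$ is optimal and the no-cooperation $[1-r_1]^++[\beta-r_1]^+$ is recovered. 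Identifying the correct binding edge of ${\cal O}_2\cap{\cal O}_3$ in each regime and then verifying the boundary values of $f$ is the principal obstacle; everything else is bookkeeping analogous to the static analysis.

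For RX2, the error-event decomposition and the reduction ${\rm Pr}({\cal E}_2)\doteq{\rm Pr}(\overline{\cal A}_1){\rm Pr}(\overline{\cal B}_1)+{\rm Pr}(\overline{\cal B}_2)$ in (\ref{eq:E_2}) transfer unchanged, since the dynamic choice made at RX1 affects neither the feedback structure nor the rate constraints seen at RX2. The only modification is that $\overline{\cal A}_1$ now denotes \emph{simultaneous} round-$1$ failure of CMO and TIAN at RX1, so that ${\rm Pr}(\overline{\cal A}_1)\doteq\rho^{-\max\{d_{1,\rm{CMO}}(1),d_{1,\rm{TIAN}}(1)\}}$ with the single-round non-cooperative diversities taken from (\ref{eq:CMO1DMT}) and (\ref{eq:TIAN1DMT}); combining this with the unchanged ${\rm Pr}(\overline{\cal B}_1)\doteq\rho^{-[1-r_2]^+}$ and ${\rm Pr}(\overline{\cal B}_2)\doteq\rho^{-[1-r_2/2]^+}$ produces the claimed $d_{2,\rm{DD}}^c(2)$ immediately.
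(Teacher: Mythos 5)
Your proposal follows essentially the same route as the paper's proof: the same decomposition $\overline{\cal A}_2=\left\{{\cal F}_{T'},\left({\cal O}_1\cup{\cal O}_2\right)\cap{\cal O}_3\right\}$, the same observation ${\cal O}_1\subseteq{\cal O}_3$ (which you justify more explicitly than the paper does) giving $d_{11,\rm DD}^c(2)=d_{11,\rm CMO}^c(2)$, the same minimization of $\gamma_{11}+\gamma_{21}+u$ over the intersected constraint regions followed by optimization over $f\in[r_1,1]$ for $d_{12,\rm DD}^c(2)$, and the same second-user argument with ${\rm Pr}\left(\overline{\cal A}_1\right)\doteq\rho^{-\max\left\{d_{1,\rm CMO}(1),d_{1,\rm TIAN}(1)\right\}}$. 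The case-by-case evaluation you defer as bookkeeping is precisely what the paper carries out via its constraint-region figures, so your sketch is correct and matches the paper's own level of rigor.
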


\begin{proof}
Based on the dynamic decoding scheme of the cooperative ARQ protocol, outage at RX1 at the end of the second transmission round can be described as follows.
\begin{equation}
\overline{\cal{A}}_2=\left\{{\cal{F}}_{T'},\left\{{\cal{O}}_1\cup{\cal{O}}_2\right\},{\cal{O}}_3\right\},
\end{equation}
where, ${\cal{F}}_{T'}$, ${\cal{O}}_1$, ${\cal{O}}_2$, and ${\cal{O}}_3$ are as defined in (\ref{eq:COP-STAT-F-CMO-HSNR}), (\ref{eq:COP-STAT-OUT1-CMO-HSNR}), (\ref{eq:COP-STAT-OUT2-CMO-HSNR}), and (\ref{eq:COP-STAT-OUT-TIAN-HSNR}), respectively.

Thus,
\begin{equation}
\begin{split}
&\overline{\cal{A}}_2=\left\{{\cal{O}}_{1,\rm{DD}}\cup{\cal{O}}_{2,\rm{DD}}\right\}\\
&\text{where,}\\
&{\cal{O}}_{11,\rm{DD}}=\left\{{\cal{F}}_{T'},{\cal{O}}_1,{\cal{O}}_3\right\}\\
&{\cal{O}}_{12,\rm{DD}}=\left\{{\cal{F}}_{T'},{\cal{O}}_2,{\cal{O}}_3\right\}.
\end{split}
\end{equation}

And hence, the following relations hold.
\begin{equation}
\begin{split}
{\rm{Pr}}\left(\overline{\cal{A}}_2\right)&={\rm{Pr}}({\cal{O}}_{11,\rm{DD}}\cup{\cal{O}}_{12,\rm{DD}})\\
&\doteq\rho^{-d_{1,\rm{DD}}^c(2)},
\end{split}
\end{equation}
where,
\begin{equation}
\begin{split}
&{\rm{Pr}}({\cal{O}}_{11,\rm{DD}})\doteq\rho^{-d_{11,\rm{DD}}^c(2)}\\
&{\rm{Pr}}({\cal{O}}_{12,\rm{DD}})\doteq\rho^{-d_{12,\rm{DD}}^c(2)}.
\end{split}
\end{equation}
thus,
\begin{equation}
d_{1,\rm{DD}}^c(2)=\min\left\{d_{11,\rm{DD}}^c(2),d_{12,\rm{DD}}^c(2)\right\}.
\end{equation}

We can notice that ${\cal{O}}_1\subset{\cal{O}}_3$, thus; ${\cal{O}}_{11,\rm{DD}}=\left\{{\cal{F}}_{T'},{\cal{O}}_1\right\}$. Which yields that $d_{11,\rm{DD}}^c(2)=d_{11,\rm{CMO}}^c(2)$.

Now, in order to find an expression for $d_{12,\rm{DD}}^c(2)$, we have to solve the following optimization problem.
\begin{equation}
d_{12,\rm{DD}}^c(2)=\underset{\gamma_{11},\gamma_{21},u\in{\cal{O}}_{12,\rm{DD}}}\min\left\{\gamma_{11}+\gamma_{21}+u\right\},
\end{equation}
where $u$ is defined as given in (\ref{eq:u}).

The shaded regions in Fig. \ref{fig:proofdd} show the constraint regions for the different MGRs $(r_1,r_2)$. Thus,
\begin{equation}
\begin{split}
&d_{12,\rm{DD}}^c(2)=\underset{f\in[r_1,1]}\min\\
&\begin{cases}
\left[1-\frac{r_1+\beta}{2}\right]^++1-\frac{r_1}{f},\qquad\text{if}\;\; r_1\geq\beta,\;r_2\geq\beta\\
\min\left\{[1-r_1]^++[\beta-r_1]^+,\left[1-\frac{r_1-(1-f)\beta}{f}\right]^+\right\}\\
\;+1-\frac{r_1}{f},\;\;\text{if}\;(1-f)\beta\leq r_1<\beta,\;r_2\geq\beta,\;r_1+r_2\geq2\beta\\
\min\left\{[1-r_1]^++[\beta-r_1]^+,1+\left[1+\beta-\frac{r_1}{1-f}\right]^+\right\}\\
\;+1-\frac{r_1}{f},\qquad\text{if}\;\;r_1<(1-f)\beta,\;r_2\geq\beta,\;r_1+r_2\geq2\beta\\
2-\frac{r_1+r_2}{2}-\frac{r_1}{f},\qquad\text{if}\;\;r_1\geq\beta,\;r_2<\beta,\;r_1+r_2\geq2\beta\\
\left[1-\frac{r_1+r_2}{2}\right]^++\left[\beta-\frac{r_1+r_2}{2}\right]^++1-\frac{r_1}{f},\\
\qquad\qquad\text{if}\;\;r_2<\beta,\;r_1\geq r_2,\;r_1+r_2<2\beta\\
\min\left\{[1-r_1]^++[\beta-r_1]^+,\left[1+\beta-\frac{r_1}{f}+\frac{1-2f}{f}r_2\right]^+\right\},\\
\qquad\qquad\text{if}\;r_1<\beta,\;r_2<\beta,\;r_1<r_2.
\end{cases}
\end{split}
\end{equation}
We notice that for $r_2\geq\beta$, $d_{12,\rm{DD}}^c(2)$ is similar to $d_{1,\rm{TIAN}}^c(2)$ given in equation (\ref{eq:d1TIAN_f}); thus, a similar minimization over $f$ can be performed. Also, for $r_2<\beta,\;r_1\geq r_2$, $d_{12,\rm{DD}}^c(2)$ is the same as $d_{12,\rm{CMO}}^c(2)$ given in equation (\ref{eq:d12CMO_f}). Finally, we perform minimization over $f$ for the function $\min\left\{[1-r_1]^++[\beta-r_1]^+,\left[1+\beta-\frac{r_1}{f}+\frac{1-2f}{f}r_2\right]^+\right\}$ using similar steps. Based on these arguments, we have
\begin{equation}
\begin{split}
&d_{12,\rm{DD}}^c(2)=\\
&\begin{cases}
d_{1,\rm{TIAN}}^c(2),\qquad r_2\geq\beta\\
d_{12,\rm{CMO}}^c(2),\qquad r_2<\beta,\;r_1\geq r_2\\
\left[\beta-\frac{(2r_1-1)r_2}{r_1}\right]^+,\qquad r_2<\beta,\;\frac{1}{2}\leq r_1<r_2\\
[1-r_1]^++[\beta-r_1]^+,\;\;\;r_2<\beta,\;r_1<\min\left\{\frac{1}{2},r_2\right\}.
\end{cases}
\end{split}
\end{equation}

For the second user, similar arguments as those for the static decoding approach in the previous subsection hold but with a little difference. For the dynamic decoding approach, RX1 either use the CMO or the TIAN form of decoding. Therefore, the probability of the outage event at RX1 at the end of the first transmission round is the maximum of what we can get using the CMO scheme and what we can get using the TIAN scheme. So, ${\rm{Pr}}\left(\overline{\cal{A}}_1\right)$ becomes as follows.
\begin{equation}
{\rm{Pr}}\left(\overline{\cal{A}}_1\right)\doteq\rho^{-\max\left\{d_{1,\rm{CMO}}(1),d_{1,\rm{TIAN}}(1)\right\}},
\label{eq:A_1Bar_DD}
\end{equation}
where $d_{1,\rm{CMO}}(1)$ and $d_{1,\rm{TIAN}}(1)$ are as given in (\ref{eq:CMO1DMT}) and (\ref{eq:TIAN1DMT}), respectively. Similarly, using equations (\ref{eq:B_1Bar}), (\ref{eq:B_2Bar}), and (\ref{eq:A_1Bar_DD}) in equation (\ref{eq:E_2}), we get
\begin{equation}
\begin{split}
{\rm{Pr}}({\cal{E}}_2)&\doteq\rho^{-\min\left\{\max\left\{d_{1,\rm{CMO}}(1),d_{1,\rm{TIAN}}(1)\right\}+[1-r_2]^+,\left[1-\frac{r_2}{2}\right]^+\right\}}\\
&\doteq\rho^{-d_{2,\rm{DD}}^c(2)}.
\end{split}
\end{equation}
Thus,
\begin{equation}
\begin{split}
&d_{2,\rm{DD}}^c(2)=\min\\
&\left\{\max\left\{d_{1,\rm{CMO}}(1),d_{1,\rm{TIAN}}(1)\right\}+[1-r_2]^+,\left[1-\frac{r_2}{2}\right]^+\right\}.
\end{split}
\end{equation}

\begin{figure}
\centering
\subfigure[$r_1\geq\beta,\;r_2\geq\beta$]{
   \includegraphics[width=75mm, height = 35mm] {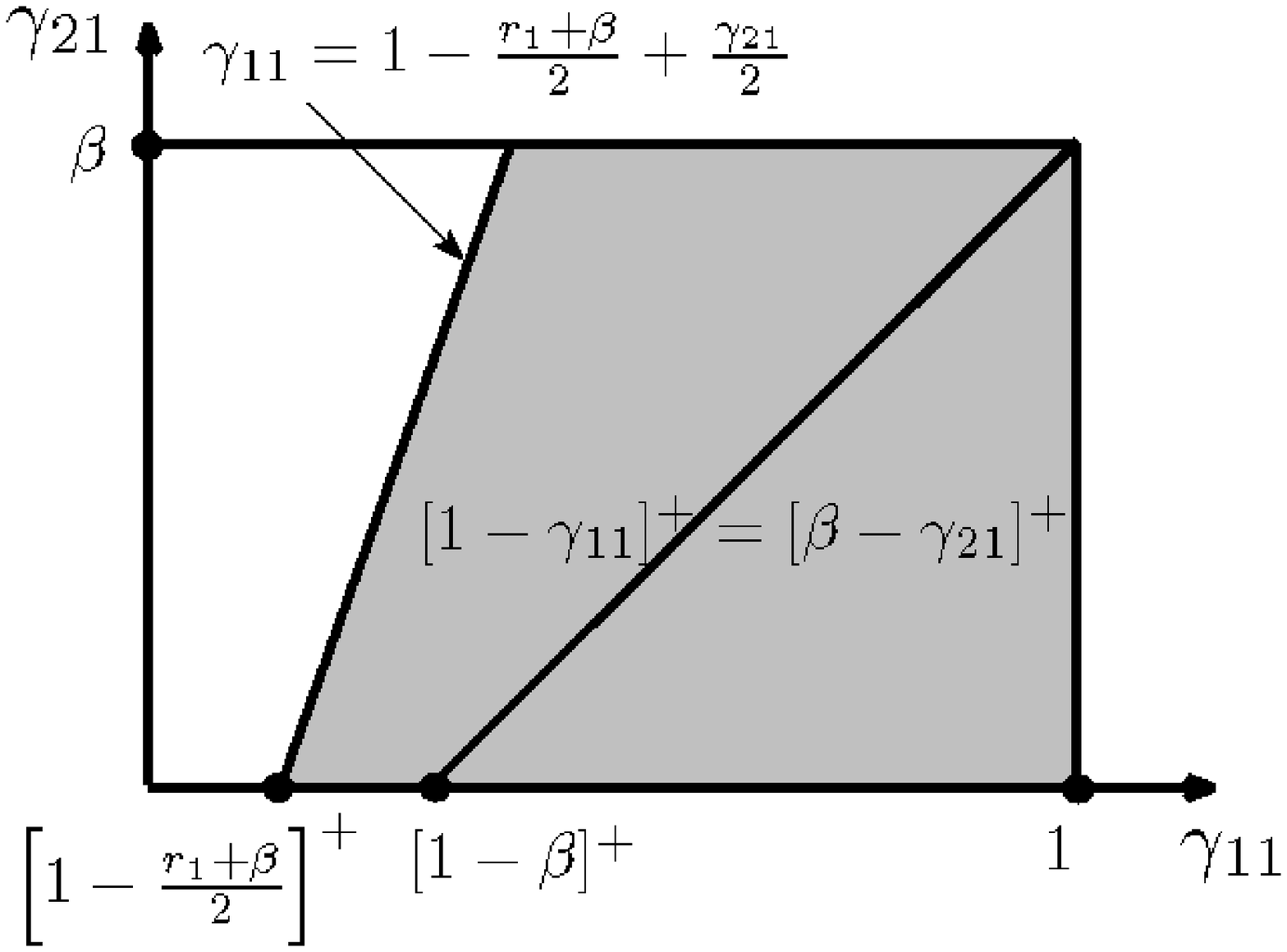}
 }
\subfigure[$(1-f)\beta\leq r_1<\beta,\;r_2\geq\beta,\;r_1+r_2\geq2\beta$]{
   \includegraphics[width=75mm, height = 35mm] {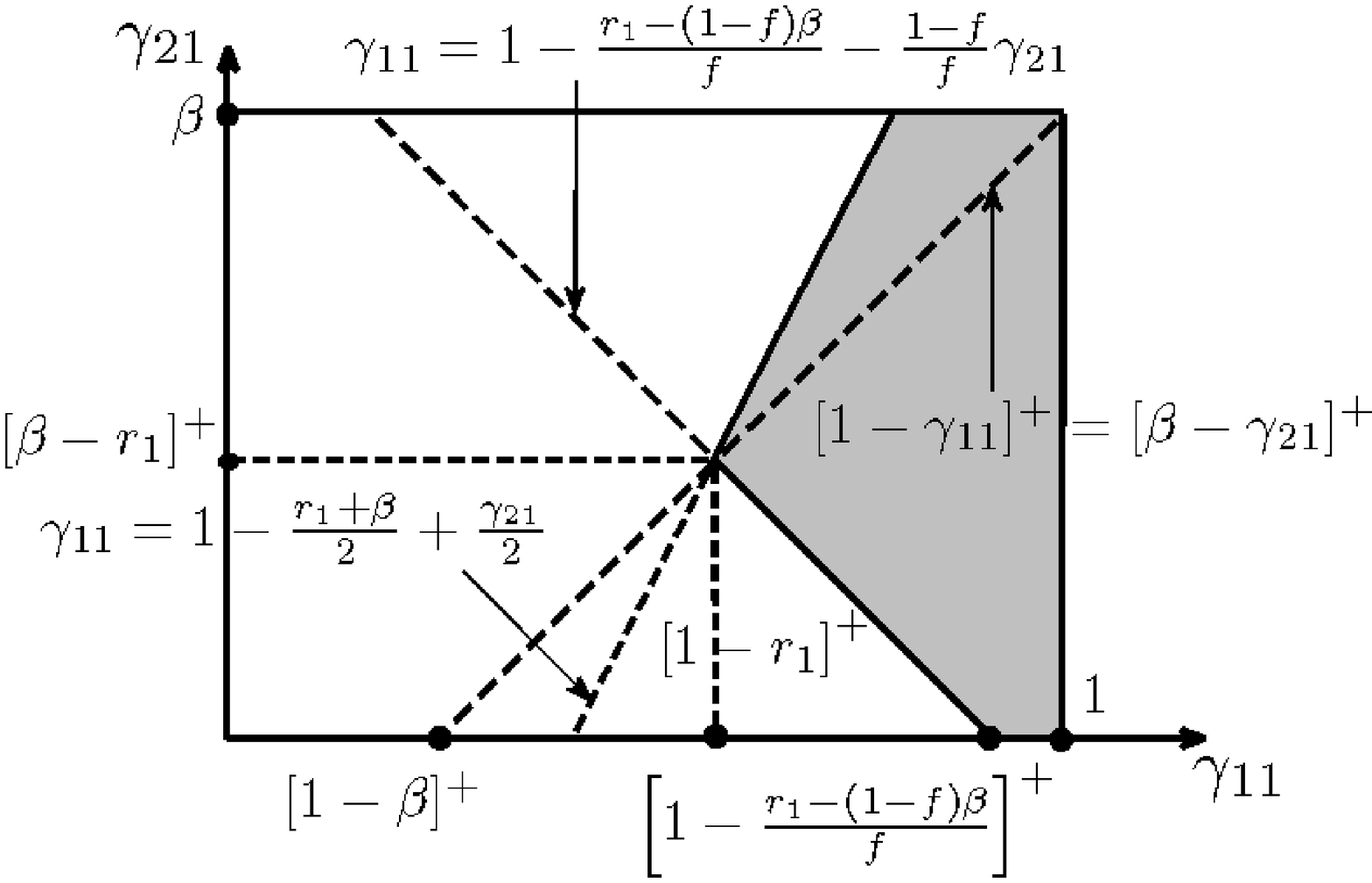}
 }
\subfigure[$r_1<(1-f)\beta,\;r_2\geq\beta,\;r_1+r_2\geq2\beta$]{
   \includegraphics[width=75mm, height = 35mm] {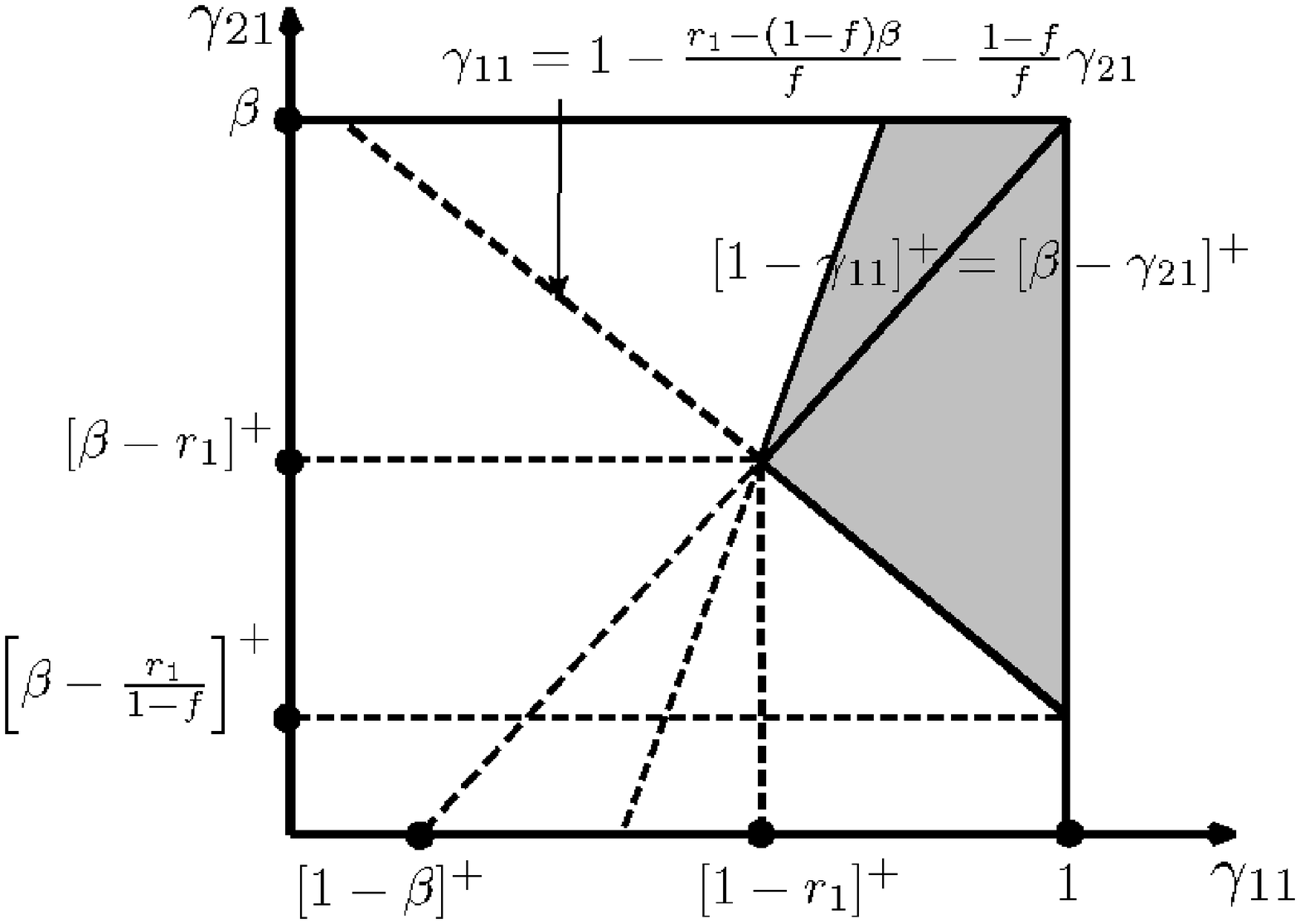}
 }
\subfigure[$r_1\geq\beta,\;r_2<\beta,\;r_1+r_2\geq2\beta$]{
   \includegraphics[width=75mm, height = 35mm] {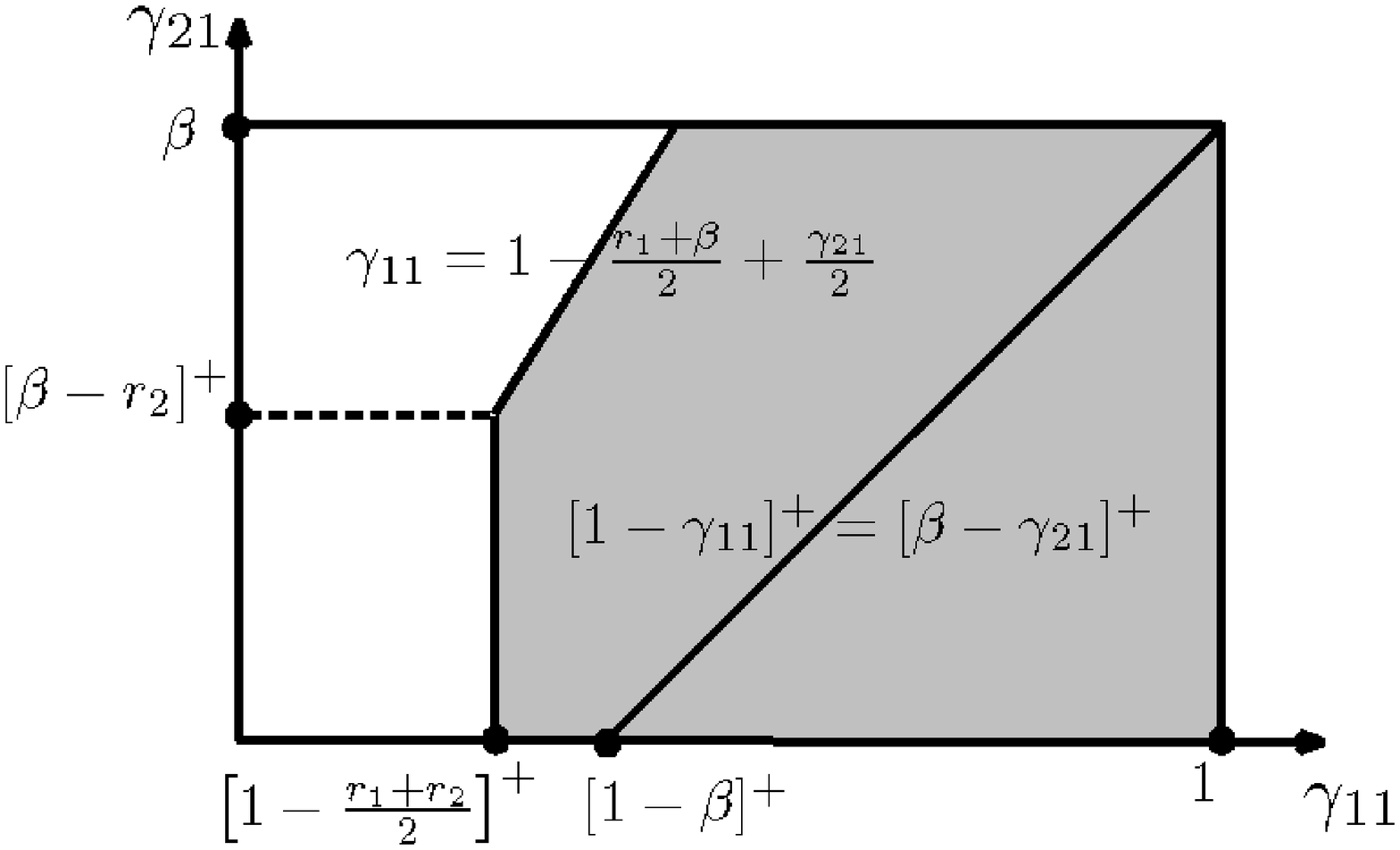}
 }
\subfigure[$r_2<\beta,\;r_1\geq r_2,\;r_1+r_2<2\beta$]{
   \includegraphics[width=75mm, height = 35mm] {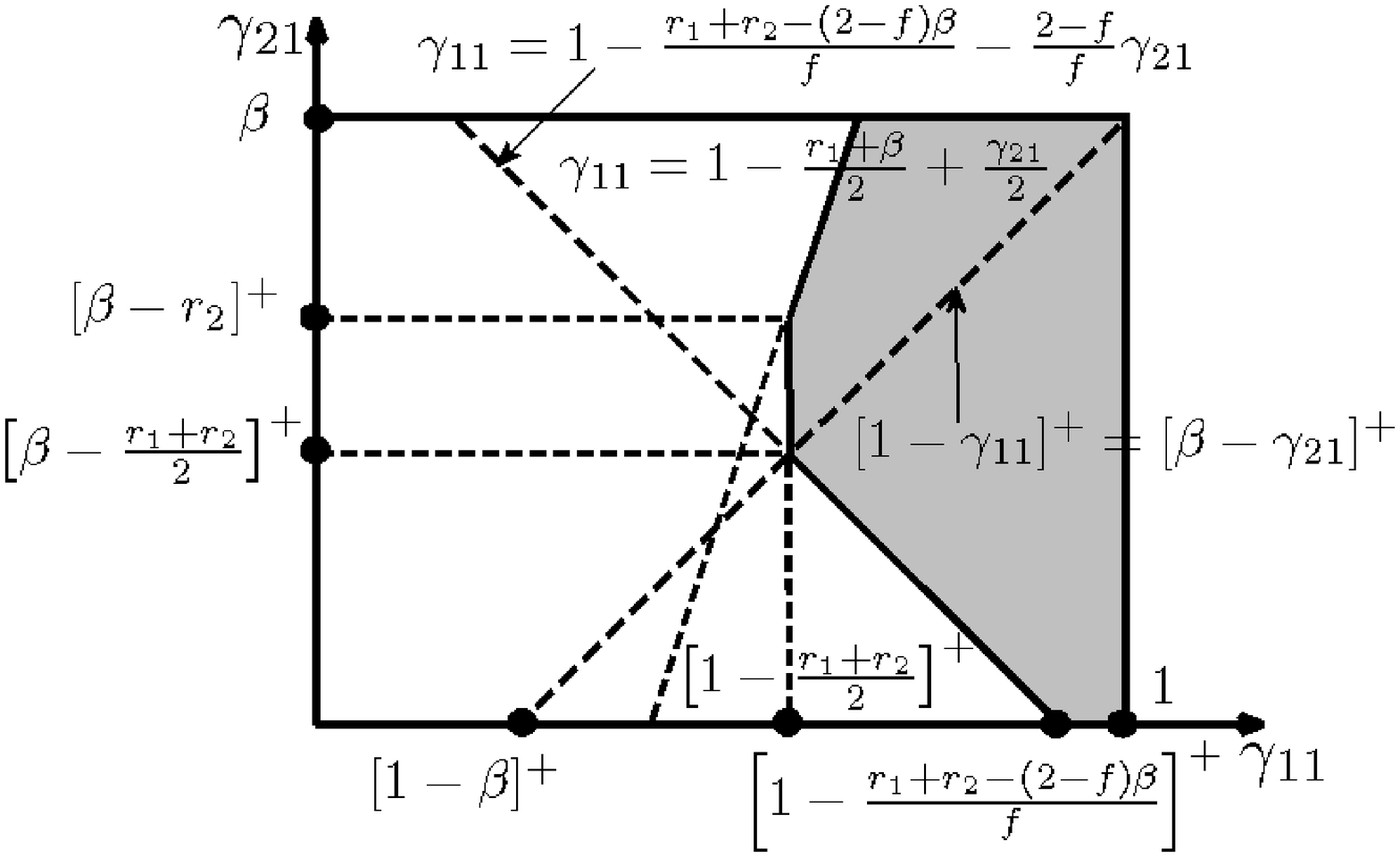}
 }
\subfigure[$r_1<{\beta},\;r_2<\beta,\;r_1<r_2$]{
   \includegraphics[width=75mm, height = 35mm] {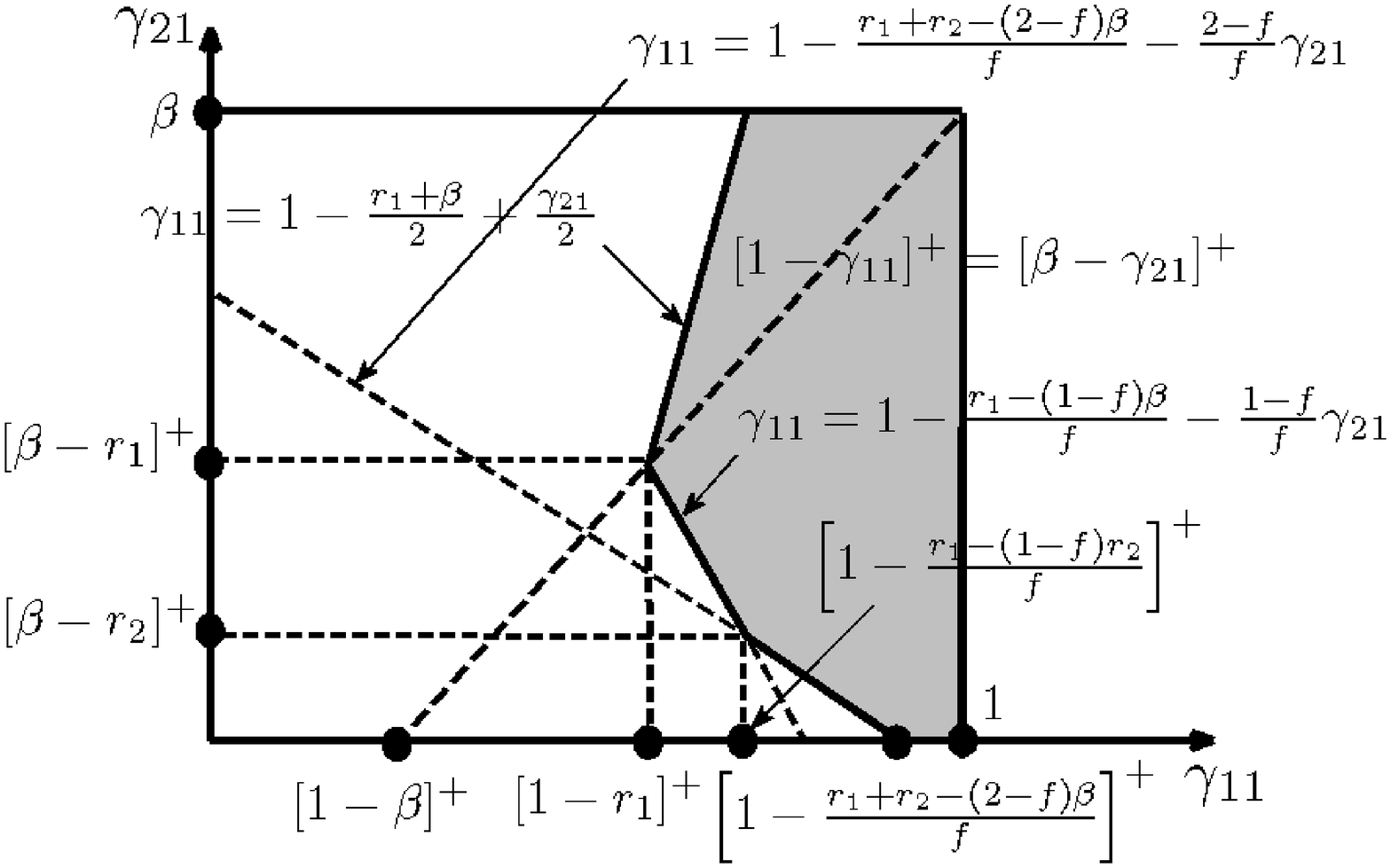}
 }
\caption{Constraint Regions for $d_{12,\rm{DD}}^c(2)$}
\label{fig:proofdd}
\end{figure}

\end{proof}

To summarize our work, we show the DMT of the first user under the use of all the previously mentioned ARQ schemes for $L=2$ in Fig. \ref{DD}. It is obvious in Fig. \ref{DD} that the performance of the cooperative ARQ with dynamic decoding is better than the overall achievable performance of its static counterpart for some values of the first user multiplexing gain, $r_1$.
\begin{figure}
	\centering
	\includegraphics[width=75mm, height = 50mm] {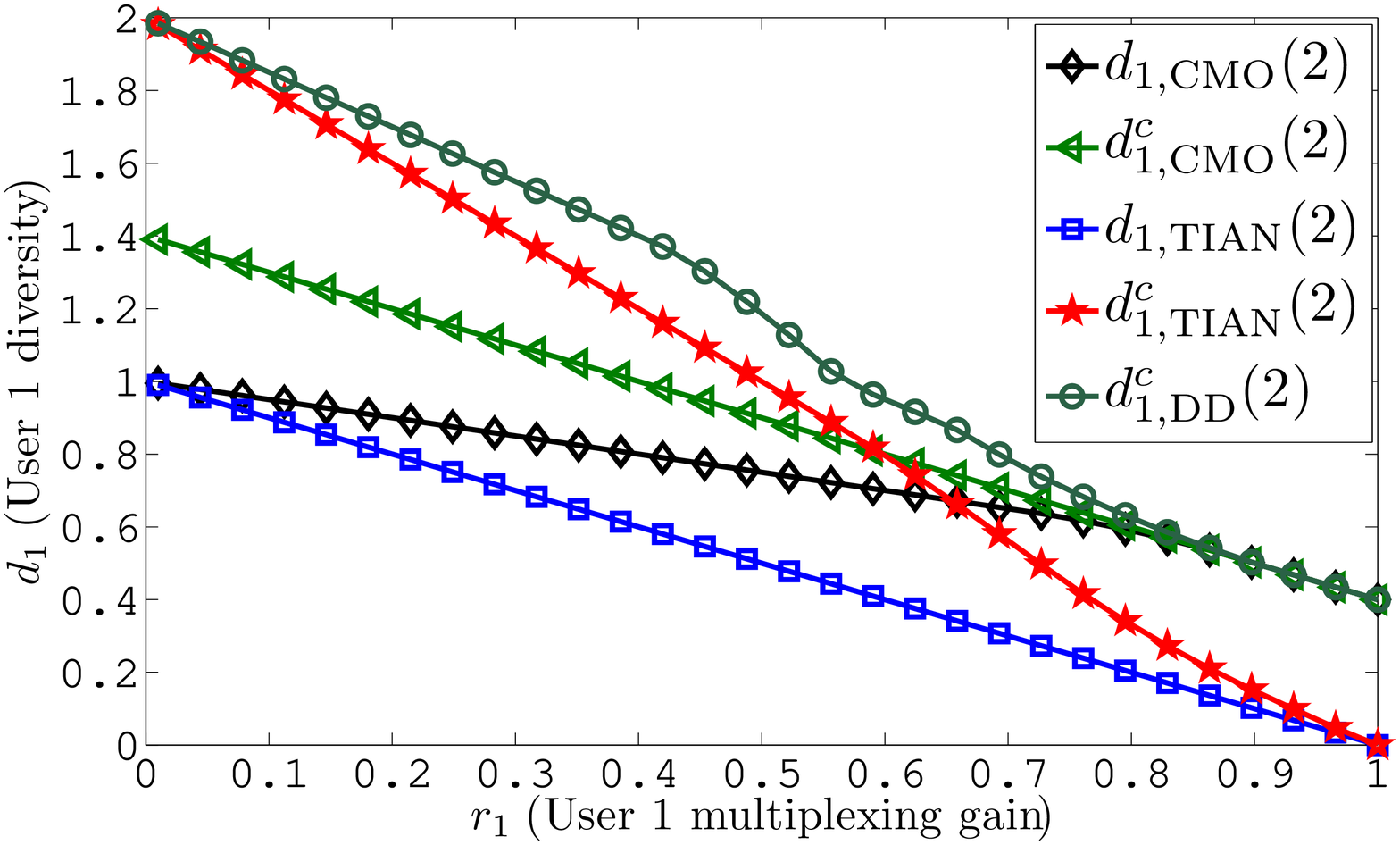}
	\caption{The DMT of user 1 using the different non-cooperative and cooperative ARQ schemes for $\beta=1.3,\;r_2=0.9$, and $L=2$.}
	\label{DD}
\vspace{-.2in}
\end{figure}

It is worthwhile noticing that RX2 diversity resulted from using cooperative dynamic decoding scheme is the maximum of RX2 diversity we get when using the cooperative CMO scheme and that we get using the cooperative TIAN scheme. This is obvious from equations (\ref{eq:COPSTATCMO}), (\ref{eq:COPSTATTIAN}), and (\ref{eq:DDdiversities}). Also, RX2 diversity under cooperation is always less than or equal to that of the non-cooperative ARQ scheme. Intuitively, the outage event of the second user under cooperation is more likely than that under no cooperation. Under cooperation, the second user lose the opportunity to retransmit its codeword in the event of a NACK reception by the both users since the second user will relay the first user's message starting from the second transmission round in this event. This can be shown analytically from equations (\ref{eq:CMO2DMT}), (\ref{eq:TIAN2DMT}), (\ref{eq:COPSTATCMO}), (\ref{eq:COPSTATTIAN}), and (\ref{eq:DDdiversities}). For $L=2$, RX2 diversity of the non-cooperative schemes is equal to $\left[1-\frac{r_2}{2}\right]^+$ while for the cooperative schemes, RX2 diversity equals to the minimum of $\left[1-\frac{r_2}{2}\right]^+$ and $d_{1}(1)+[1-r_2]^+$, where $d_{1}(1)$ is RX1 diversity of a single transmission round \cite{me}.

\section{Conclusion}\label{Con}
In this paper, we characterized the achievable diversity, multiplexing, and delay tradeoff of a two user single antenna  Rayleigh fading ARQ ZIC under the use of two different ARQ protocols. We used the well-known HK approach as well as two special cases of it, only a common or a private message is transmitted, to derive the achievability result. Our characterization of the three dimensional tradeoff comes in closed-form expressions of the individual diversities as a function of the maximum number of  transmission rounds (maximum delay), multiplexing gain pairs, interference level, and the rate and power splitting parameters.

\appendix
The outage region at RX1 given ${\cal{C}}_i$ at round $l$ for the non-cooperative ARQ ZIC system with maximum of $L$ transmission rounds is given in equation (\ref{eq:outage1}), where ${\cal{C}}_i$ is as defined in Section. \ref{coop}. The high-$\rho$ approximation of this outage region at round $l=L$ can be written as
\begin{equation}
\overline{\cal{A}}_L|{\cal{C}}_i=\left\{{\cal{O}}_{11,\rm{HK}}(L,i)\cup{\cal{O}}_{12,\rm{HK}}(L,i)\right\}
\end{equation}
where,
\begin{equation}
\begin{split}
&{\cal{O}}_{11,\rm{HK}}(L,i)=\\
&\Bigg\{\gamma_{11},\gamma_{21}:\;i\left[1-\gamma_{11}-\left[\beta-\gamma_{21}-b\right]^+\right]^+\\
&\qquad\qquad\qquad+(L-i)\left[1-\gamma_{11}\right]^+)<r_1\Bigg\}
\end{split}
\label{eq:outage11-HSNR}
\end{equation}
\begin{equation}
\begin{split}
&{\cal{O}}_{12,\rm{HK}}(L,i)=\\
&\Bigg\{\gamma_{11},\gamma_{21}:\;i\left[\max\left\{1-\gamma_{11},\beta-\gamma_{21}\right\}-\left[\beta-\gamma_{21}-b\right]^+\right]^+\\
&\qquad\qquad\;+(L-i)\max\left\{[1-\gamma_{21}]^+,[\beta-\gamma_{21}]^+\right\}<r_1+t_2\Bigg\}.
\end{split}
\label{eq:outage12-HSNR}
\end{equation}
The outage probability at RX1 given ${\cal{C}}_i$ at round $l=L$ can then be given by
\begin{equation}
\begin{split}
{\rm{Pr}}(\overline{\cal{A}}_L|{\cal{C}}_i)&={\rm{Pr}}\left\{{\cal{O}}_{11,\rm{HK}}(L,i)\cup{\cal{O}}_{12,\rm{HK}}(L,i)\right\}\\
&\doteq\rho^{-d_{\overline{\cal{A}}_L|{\cal{C}}_i}},
\end{split}
\label{eq:P(AL/Ci)1}
\end{equation}
where,
\begin{equation}
\begin{split}
&{\rm{Pr}}({\cal{O}}_{11,\rm{HK}}(L,i))\doteq\rho^{-d_{11,\rm{HK}}(L,i)}\\
&{\rm{Pr}}({\cal{O}}_{12,\rm{HK}}(L,i))\doteq\rho^{-d_{12,\rm{HK}}(L,i)}
\end{split}
\end{equation}
Therefore, we have
\begin{equation}
\begin{split}
{\rm{Pr}}(\overline{\cal{A}}_L|{\cal{C}}_i)&\doteq{\rm{Pr}}({\cal{O}}_{11,\rm{HK}}(L,i))+{\rm{Pr}}({\cal{O}}_{12,\rm{HK}}(L,i))\\
&\doteq\rho^{-\min\left\{d_{11,\rm{HK}}(L,i),d_{12,\rm{HK}}(L,i)\right\}},
\end{split}
\label{eq:P(AL/Ci)2}
\end{equation}
From (\ref{eq:P(AL/Ci)1}) and (\ref{eq:P(AL/Ci)2}), we have
\begin{equation}
d_{\overline{\cal{A}}_L|{\cal{C}}_i}=\min\left\{d_{11,\rm{HK}}(L,i),d_{12,\rm{HK}}(L,i)\right\}.
\end{equation}

Firstly, in order to find an expression for $d_{11,\rm{HK}}(L,i)$, we have to solve the following minimization problem.
\begin{equation}
d_{11,\rm{HK}}(L,i)=\underset{\gamma_{11},\gamma_{21}\in{\cal{O}}_{11,\rm{HK}}}\min\left\{\gamma_{11}+\gamma_{21}\right\}.
\end{equation}

The shaded regions in Fig. \ref{fig:proof4-a} show the constraint regions of $\gamma_{11}$ and $\gamma_{21}$ for $d_{11,\rm{HK}}(L,i)$ in the cases $r_1\geq(L-i)(\beta-b)$ and $r_1<(L-i)(\beta-b)$. Thus, we have
\begin{equation}
\begin{split}
d_{11,\rm{HK}}(L,i)=\begin{cases}
\left[1-\frac{r_1+i[\beta-b]^+}{L}\right]^+\text{if}\;\; r_1\geq(L-i)(\beta-b)\\
\left[1-\frac{r_1}{L-i}\right]^+,\qquad\text{if}\;\;r_1<(L-i)(\beta-b),
\end{cases}
\end{split}
\end{equation}
which yields
\begin{equation}
\begin{split}
&d_{11,\rm{HK}}(L,i)=\\
&\max\left\{\left[1-\frac{r_1+i[\beta-b]^+}{L}\right]^+,\left[1-\frac{r_1}{L-i}\right]^+\right\}.
\end{split}
\end{equation}

\begin{figure}
\centering
\subfigure[$r_1\geq(L-i)(\beta-b)$]{
   \includegraphics[width=75mm, height = 40mm] {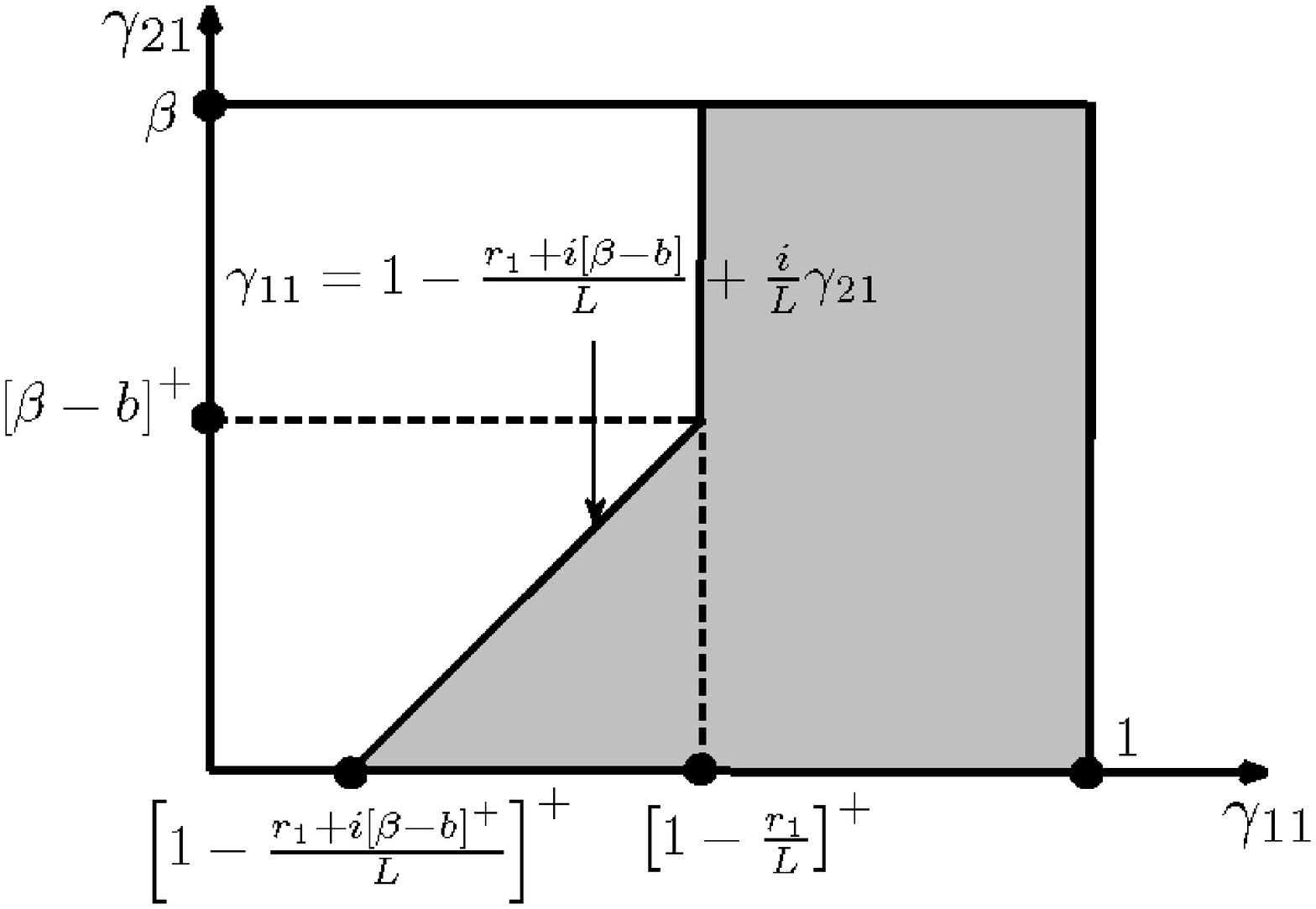}
 }
\subfigure[$r_1<(L-i)(\beta-b)$]{
   \includegraphics[width=75mm, height = 40mm] {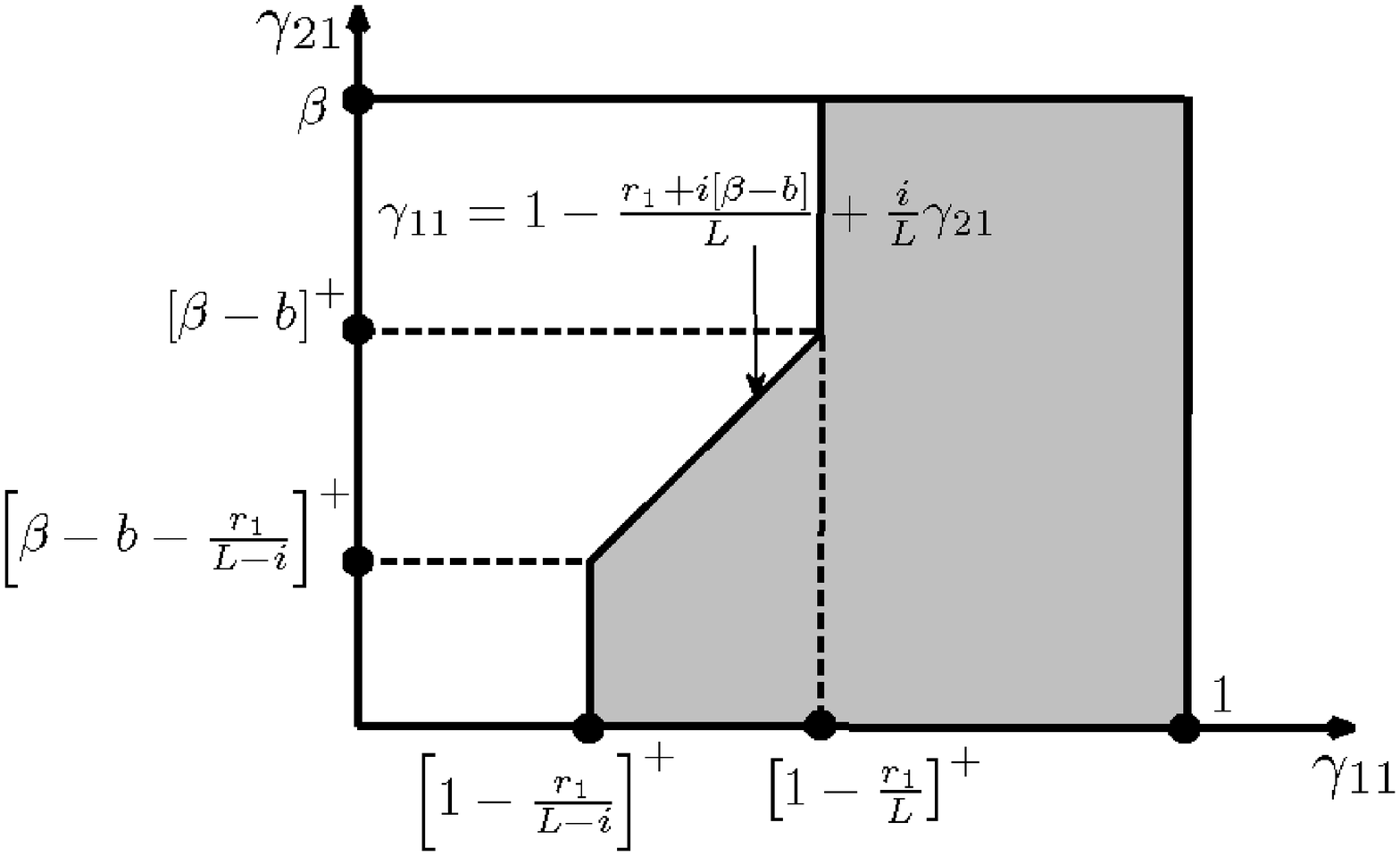}
 }
\caption{Constraint regions of $d_{11,\rm{HK}}(L,i)$}
\label{fig:proof4-a}
\end{figure}

Now, for $d_{12,\rm{HK}}(L,i)$, we have the following minimization problem.
\begin{equation}
d_{12,\rm{HK}}(L,i)=\underset{\gamma_{11},\gamma_{21}\in{\cal{O}}_{12,\rm{HK}}}\min\left\{\gamma_{11}+\gamma_{21}\right\}.
\end{equation}

Similarly, the shaded regions in Fig. \ref{fig:proof4-b} show the constraint regions of $\gamma_{11}$ and $\gamma_{21}$ for $d_{12,\rm{HK}}(L,i)$ considering different values of $r_1$. Thus, we have
\begin{equation}
\begin{split}
&d_{12,\rm{HK}}(L,i)=\\
&\begin{cases}
\left[1-\frac{(r_1+t_2)+i\left[\beta-b\right]^+}{L}\right]^+,\;\;\text{if}\;\; r_1+t_2\geq(L-i)\beta+ib\\
\left[1-\frac{(r_1+t_2)-ib}{L-i}\right]^++\left[\beta-\frac{(r_1+t_2)-ib}{L-i}\right]^+,\\
\qquad\qquad\qquad\qquad\text{if}\;\; Lb<r_1+t_2<(L-i)\beta+ib\\
\left[1-\frac{r_1+t_2}{L}\right]^++\left[\beta-\frac{r_1+t_2}{L}\right]^+,\;\;\text{if}\;\;r_1+t_2\leq Lb.
\end{cases}
\end{split}
\end{equation}

\begin{figure}
\centering
\subfigure[$r_1+t_2\geq(L-i)\beta+ib$]{
   \includegraphics[width=75mm, height = 40mm] {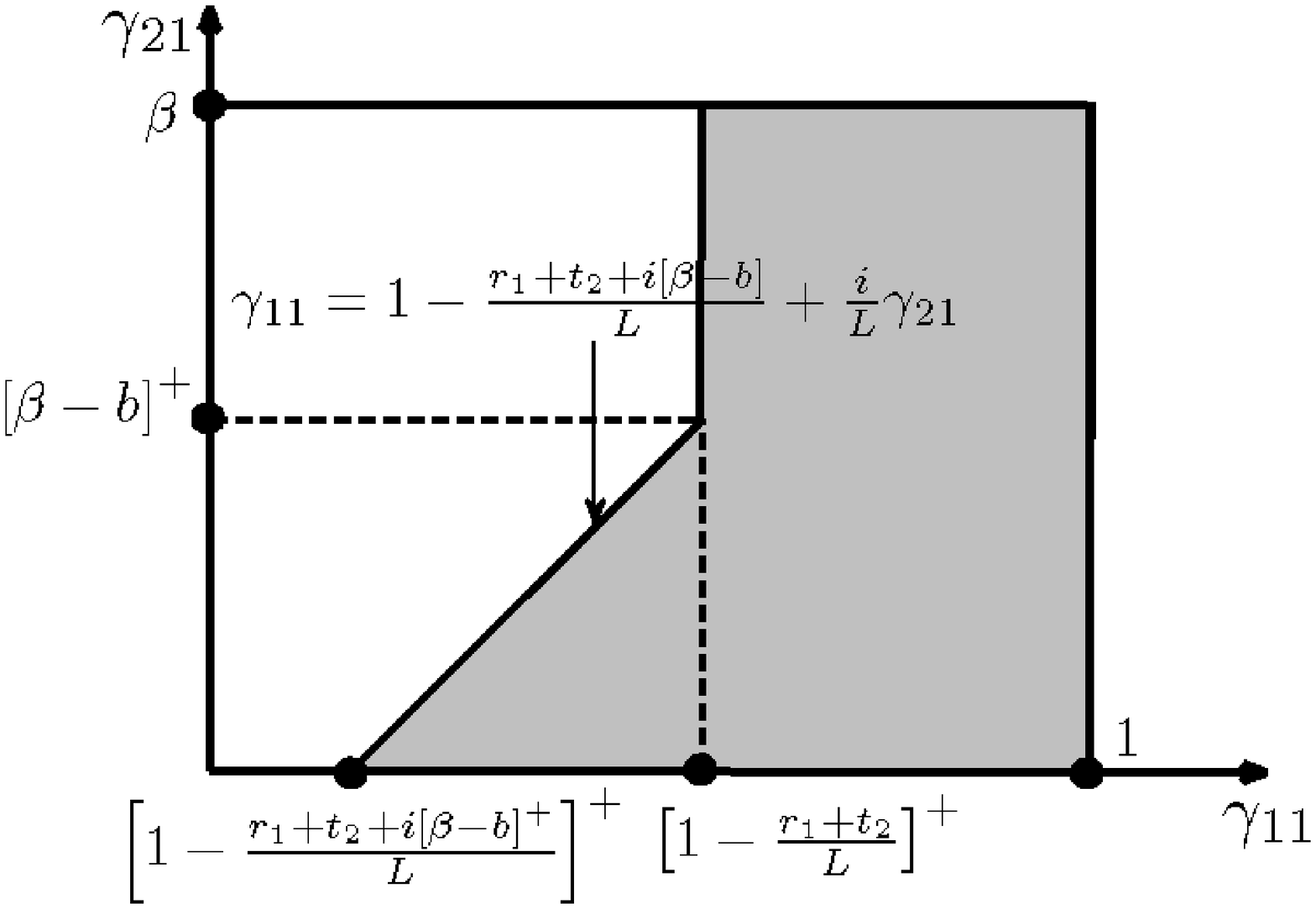}
 }
\subfigure[$Lb\geq r_1+t_2<(L-i)\beta+ib$]{
   \includegraphics[width=75mm, height = 40mm] {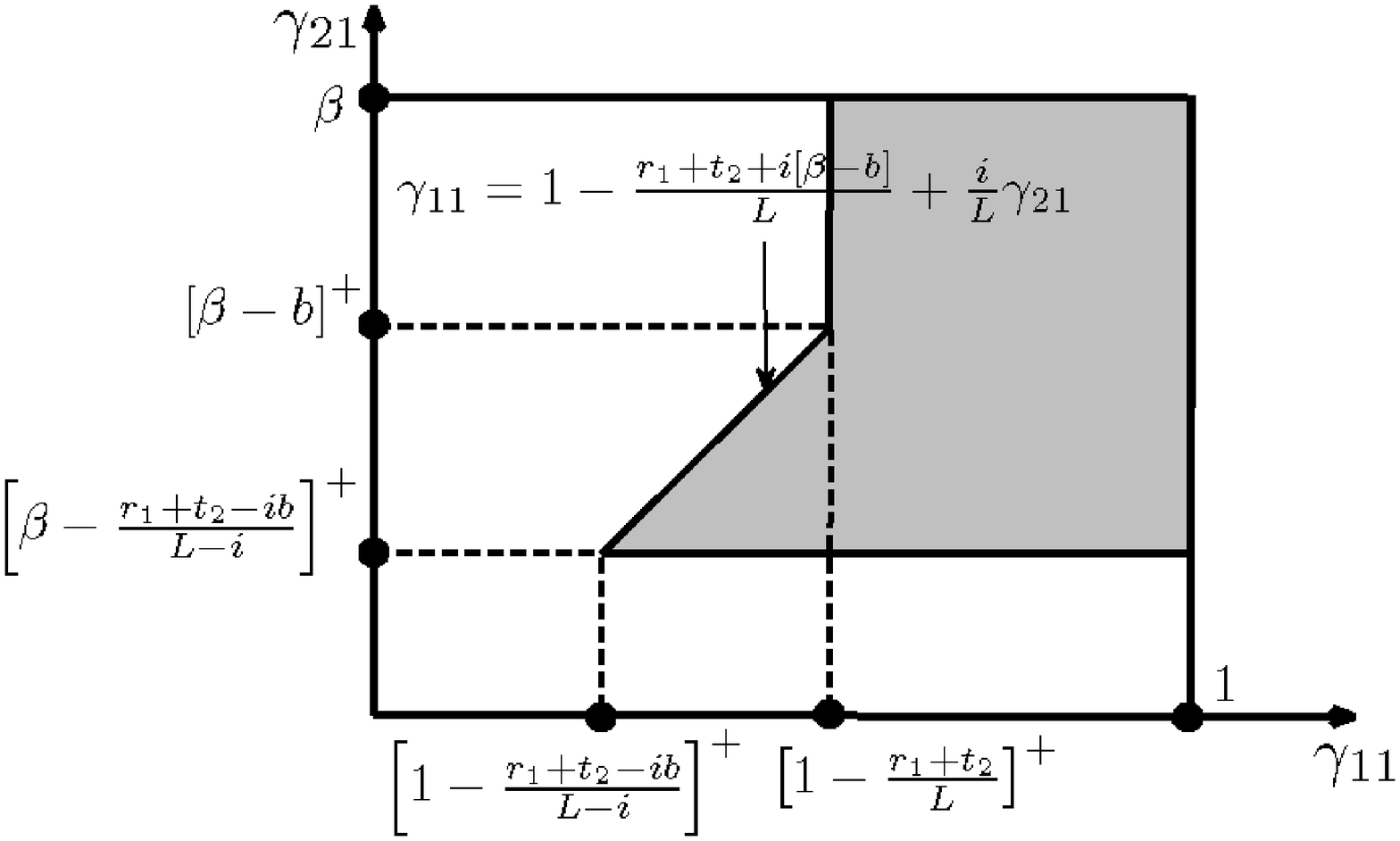}
 }
\subfigure[$r_1+t_2<Lb$]{
   \includegraphics[width=75mm, height = 40mm] {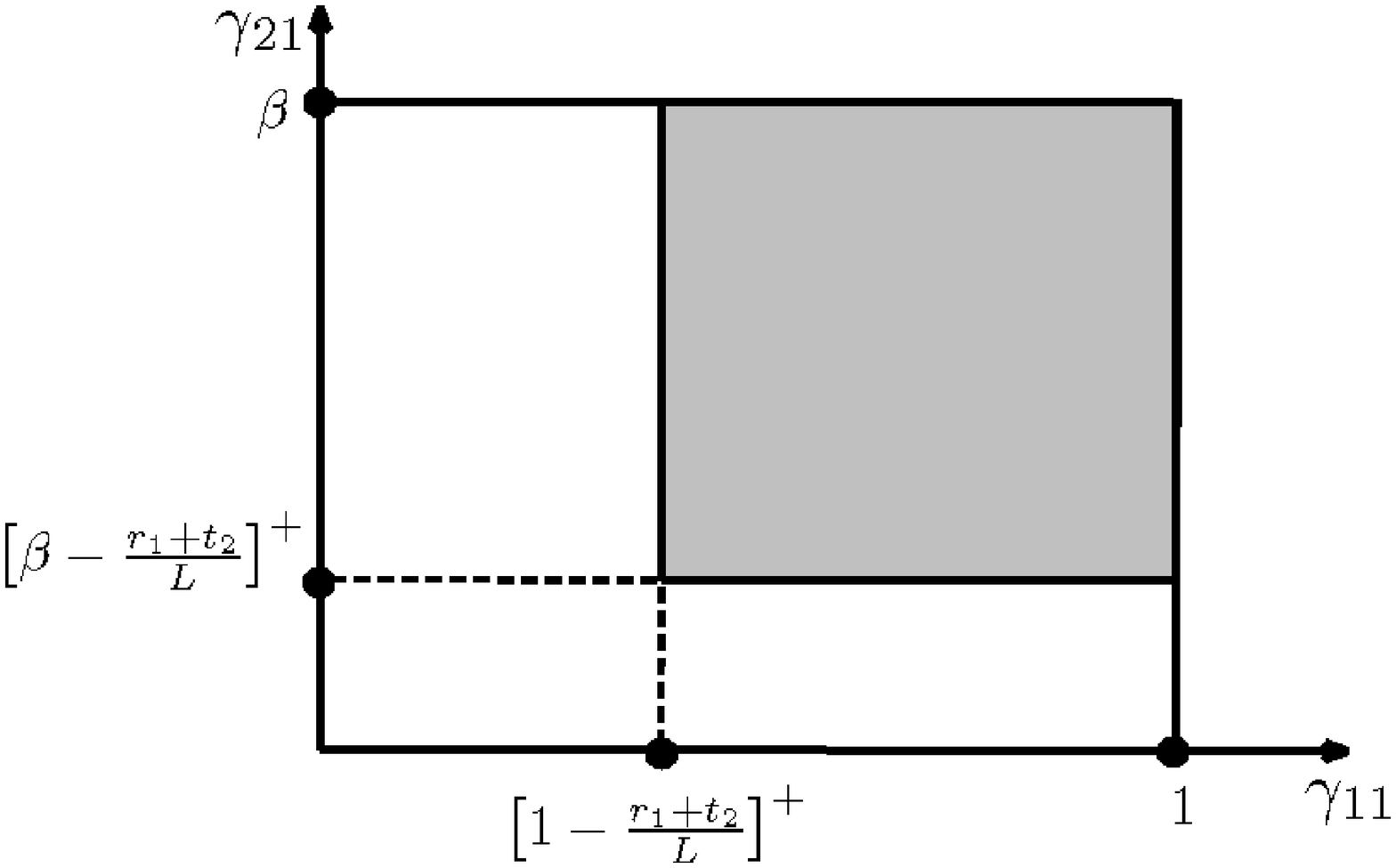}
 }
\caption{Constraint regions of $d_{12,\rm{HK}}(L,i)$}
\label{fig:proof4-b}
\end{figure}

\bibliographystyle{IEEEbib}
\bibliography{MyLib}

\end{document}